\newtheorem{conjecture}{Conjecture}
\newtheorem{theorem}{Theorem}
\newtheorem{corollary}{Corollary}
\newtheorem{definition}{Definition}
\newtheorem{observation}{Observation}
\newtheorem{lemma}{Lemma}
\newcommand{\idx}{\text{idx}}
\newcommand{\leaveout}[1]{}
\renewcommand{\medskip}{\smallskip}
\renewcommand{\int}{{\ensuremath{\rm int\,}}}
\date{\today}
\title{Trees and co-trees with constant maximum degree \\
in planar 3-connected graphs}
\author{Therese Biedl\thanks{David R.~Cheriton School of Computer
Science, University of Waterloo, Waterloo, Ontario N2L 1A2, Canada.
Supported by NSERC and the Ross and Muriel Cheriton Fellowship.
Research initiated while participating at Dagstuhl seminar 13421.}
}
\begin{document}

\maketitle
\begin{abstract}
This paper considers the
conjecture by Gr{\"u}nbaum that every planar 3-connected graph
has a spanning tree $T$ such that both $T$ and its co-tree have 
maximum degree at most 3.  Here, the {\em co-tree} of $T$ is
the spanning tree of the dual obtained by taking the duals of
the non-tree edges.  While Gr\"unbaum's conjecture remains
open, we show that every planar 3-connected graph has a spanning
tree $T$ such that both $T$ and its co-tree have
maximum degree at most 5.  It can be found in linear time.
\end{abstract}

%%%%%%%%%%%%%%%%%%%%%%%%%%%%%%%%%%%%%%%%%%%%%%%%%%%%%%%%%%%%%%%%%%%%%%%%
\section{Introduction}

In 1966, Barnette showed that
every planar 3-connected graph has a spanning tree with maximum degree
at most 3  \cite{Bar66}.  (In the following, a {\em $k$-tree} denotes
a tree with maximum degree at most $k$.)   Since the dual of a
3-connected planar graph is also 3-connected, the dual graph $G^*$ also
has a spanning 3-tree.  Gr\"unbaum \cite{Gru70} conjectured that there are
spanning 3-trees in the graph and its dual that are simultaneous in the
sense of being tree and co-tree.  For any spanning tree $T$
in a planar graph, define the  {\em co-tree} to be the subgraph
of the dual graph formed by taking the dual edges of
the edges in $G-T$.  Since cuts in planar graphs correspond
to union of cycles in the dual graph, it is easy to see that the co-tree is
a spanning tree of $G^*$.  In 1970, Gr\"{u}nbaum posed the following conjecture:

\begin{conjecture}\cite{Gru70}
Every planar 3-connected graph has a spanning 3-tree for which
the co-tree is a spanning 3-tree of the dual graph.
\end{conjecture}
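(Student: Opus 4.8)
The plan is to first restate Gr\"unbaum's conjecture in terms of a single spanning tree of $G$. For any spanning tree $T$, a dual vertex $f^*$ has co-tree degree equal to the number of edges of the face $f$ that are \emph{not} in $T$, so the conjecture asks for a spanning tree $T$ with (i) at most $3$ tree edges at every vertex and (ii) at most $3$ non-tree edges on every face. Two extreme cases are then immediate: if $G$ is a triangulation, (ii) holds for \emph{every} spanning tree, so a Barnette $3$-tree~\cite{Bar66} suffices; dually, if $G$ is cubic then (i) is automatic and one applies Barnette's theorem to $G^*$. The real content is the ``mixed'' regime, where $G$ has both large faces and high-degree vertices. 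I would also record the slack available: the tree has $n-1$ edges and the co-tree has $|F|-1$ edges, so $\sum_v\deg_T(v)=2(n-1)$ and the co-tree degrees sum to $2(|F|-1)$; on average both the primal tree degrees and the dual co-tree degrees lie below $2$, and the whole game is to convert this global slack into an admissible local distribution.

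\textbf{Main approach: degree-reducing tree exchanges.} Begin with a Barnette $3$-tree $T$, so (i) already holds, and iteratively repair violations of (ii). If a face $f$ carries $\ge 4$ non-tree edges, pick a non-tree edge $h$ on its boundary and let $C_h\subseteq T+h$ be the fundamental cycle of $h$. Since $f$ has more than one non-tree edge, $C_h$ is not the boundary cycle of $f$, so $C_h$ contains a tree edge $e'$ not on $f$; then $T':=T-e'+h$ is again a spanning tree and has one fewer non-tree edge on $f$. The only degrees that can worsen under this exchange are the two endpoints of $h$ (each gains a tree edge) and the two faces incident to $e'$ (each gains a non-tree edge). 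I would prove that the pair $(h,e')$ can always be chosen so that those two endpoints currently carry at most $2$ tree edges and those two faces currently carry at most $2$ non-tree edges; then the exchange creates no new violation, the potential $\Phi(T)=\sum_v\max(0,\deg_T v-3)+\sum_f\max(0,\deg_{T^*}f-3)$ strictly decreases, and the procedure terminates. Where no such ``clean'' exchange exists, I would instead run a chain of exchanges that slides the surplus edge along $C_h$ until it is absorbed by a vertex or face with spare capacity --- an augmenting-path argument whose termination is charged to the averaging bound above and to $3$-connectivity, which limits how tightly a saturated face and a saturated vertex can be forced to interlock.

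\textbf{The main obstacle.} The crux is exactly the step ``a clean exchange, or a terminating chain of them, always exists.'' The dangerous configurations are those in which every non-tree edge on the violated face $f$ has a fundamental cycle touching only tree edges whose other faces are already saturated, while at the same time every admissible replacement edge meets a vertex that already carries three tree edges; excluding these seems to require a genuinely global invariant rather than the local counting above. A plausible way to obtain one is not to start from an arbitrary Barnette tree but to build $T$ together with a combinatorial ordering of $G$ --- a canonical ordering, or the standard construction of $3$-connected planar graphs from a wheel by edge-additions and vertex-splits --- maintaining inductively that the partial tree and partial co-tree satisfy the degree bounds \emph{with precisely enough room at the current boundary} for the next step. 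The difficulty then migrates to calibrating that inductive hypothesis tightly enough that one operation overloads neither a vertex's budget of three tree edges nor a face's budget of three non-tree edges; this simultaneous balancing of the primal and dual degree budgets along an inductive construction is, I expect, the real obstacle, and the reason that what is provable at present is the degree-$5$ relaxation announced in the abstract rather than the conjecture itself.
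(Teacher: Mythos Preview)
The statement you were asked to prove is Gr\"unbaum's \emph{conjecture}, which the paper explicitly leaves open; there is no proof in the paper to compare your attempt against. The paper proves only the degree-$5$ relaxation, and it does so not by repairing a Barnette tree via local exchanges but by reading both $T$ and $T^*$ out of a single global structure (the canonical ordering and its dual), so even as an approach to the weaker result your method is quite different from the paper's.

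As a proof of the conjecture itself, your proposal has a genuine gap, and you have located it yourself: the claim that a ``clean'' exchange, or a terminating chain of exchanges, always exists is asserted but not proved. The averaging argument you give shows only that the \emph{total} slack is positive, not that it can be reached from the violated face by a chain of fundamental-cycle swaps each of which respects both the primal and dual degree caps. Concretely, nothing in your argument rules out a configuration in which every non-tree edge $h$ on the bad face $f$ has a fundamental cycle $C_h$ all of whose tree edges border already-saturated faces, while simultaneously both endpoints of every such $h$ already carry three tree edges; your appeal to $3$-connectivity to ``limit how tightly a saturated face and a saturated vertex can be forced to interlock'' is a hope, not a lemma. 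Until that step is made precise your outline is a plausible program, not a proof --- which, to your credit, your final paragraph concedes.
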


This conjecture was still open in 2007 \cite{Gru07}, and to our knowledge
remains open today.  This paper proves a slightly weaker statement:
Every planar 3-connected graph has a spanning 5-tree for which
the co-tree is a spanning 5-tree of the dual graph.

Our approach is to read this spanning 5-tree from the {\em canonical ordering},
a decomposition that exists for all 3-connected planar graphs \cite{Kant96}
and that has properties useful for many algorithms for graph drawing (see
e.g. \cite{CK97,Kant96,NR04}) and other applications (see e.g. \cite{HKL99}).
This will be formally defined in Section~\ref{se:definition}.  There are
readily available implementations for finding a canonical ordering 
(see for example \cite{boost,pigale}), 
and getting our tree from the canonical ordering is nearly trivial,
so our trees not only can be found in linear time, but it would be
very easy to implement the algorithm.

The canonical ordering is useful for Barnette's theorem as well.
Barnette's proof \cite{Bar66}
is constructive, but the algorithm that can be derived from the proof
likely has quadratic run-time (he did not analyze it).  With
a slightly more structured proof and suitable data structures, it is
possible to find the 3-tree in linear time \cite{Strothmann-PhD}.  But
in fact, the 3-tree can be directly read from the canonical ordering.
This was mentioned by Chrobak and Kant in their
technical report \cite{CK-TR93}, but no details were given as to why
the degree-bound holds, and they did not include the result in
their journal version \cite{CK97}.
We provide these details in Section~\ref{se:Barnette},
somewhat as a warm-up and because the key lemma will be needed later.
Then we prove the weakened version of
Gr\"unbaum's conjecture in Section~\ref{se:Gruenbaum}.

%%%%%%%%%%%%%%%%%%%%%%%%%%%%%%%%%%%%%%%%%%%%%%%%%%%%%%%%%%%%%%%%%%%%%%%%
\section{Background}
\label{se:definition}

Assume that $G=(V,E)$ is a planar graph, i.e., it can be drawn
in the plane without crossing.  Also assume that $G$ is
3-connected, i.e., for any two vertex $\{u,v\}$ the graph resulting
from deleting $u$ and $v$ is still connected.  By Whitney's theorem
a 3-connected planar graph $G$ has a unique {\em combinatorial embedding},
i.e., in any planar drawing of $G$ the circular clockwise order of edges
around each vertex $v$ is the same, up to reversal of all these orders.
Given a planar drawing $\Gamma$, a {\em face} is a maximal connected
region of $\Bbb{R}^2-\Gamma$.  The unbounded face is called the
{\em outer-face}, all other faces are {\em interior faces}.  

Define the {\em dual graph} $G^*$ 
as follows.  For every face $f$ in $G$, add a vertex $f^*$
to $G^*$.  If $e$ is an edge of $G$ with incident faces $f_\ell$ and
$f_r$, then add edge $e^*:=(f_\ell^*,f_r^*)$ to $G^*$; $e^*$ is called
the {\em dual edge} of $e$.  
%Note that every vertex $v$ of $G$ gives
%rise to a face in $G^*$; this face will be denoted $v^*$. 

De Fraysseix, Pach and Pollack \cite{FPP90} were the first to introduce
a canonical ordering for triangulated planar graphs.
Kant \cite{Kant96} generalized the canonical ordering to all 3-connected
planar graphs.  

\begin{definition}\cite{Kant96}
A {\em canonical ordering} of a planar graph $G$ with a fixed combinatorial
embedding and outer-face is an ordered
partition $V=V_1\cup \dots \cup V_K$ that satisfies the following:
\begin{itemize}
\item $V_1$ consists of two vertices $v_1$ and $v_2$ where $v_2$
	is the counter-clockwise neighbour of $v_1$ on the outer-face.
\item $V_K$ is a singleton $\{v_n\}$ where $v_n$ is the
	clockwise neighbour of $v_1$ on the outer-face.
\item For each $k$ in $2,\dots,K$, the graph $G[V_1\cup \dots \cup V_k]$
	induced by $V_1\cup \dots \cup V_k$ is 2-connected and contains
	edge $(v_1,v_2)$ and all vertices of $V_k$ on the outer-face.
\item For each $k$ in $2,\dots,K-1$ one of the two following
	conditions hold:
	\begin{enumerate}
	\item $V_k$ contains a single vertex $z$ that has at least
		two neighbours in $V_1\cup \dots \cup V_{k-1}$
		and at least one neighbour in $V_{k+1}\cup \dots \cup V_K$.
	\item $V_k$ contains $\ell\geq 2$ vertices that induce
		a path $z_1-z_2-\dots-z_\ell$, enumerated in clockwise
		order around the outer-face of $G[V_1\cup \dots \cup V_{k}]$.
		Vertices $z_1$ and $z_\ell$
		have exactly one neighbour each in $V_1\cup \dots \cup V_{k-1}$,
		while $z_2,\dots,z_{\ell-1}$ have no such neighbours.
		Each $z_i$, $1\leq i\leq \ell$ has at least one neighbour
		in $V_{k+1}\cup \dots \cup V_K$.
	\end{enumerate}
\end{itemize}
\end{definition}

\begin{figure}[ht]
\hspace*{\fill}
\includegraphics[scale=0.3,page=1]{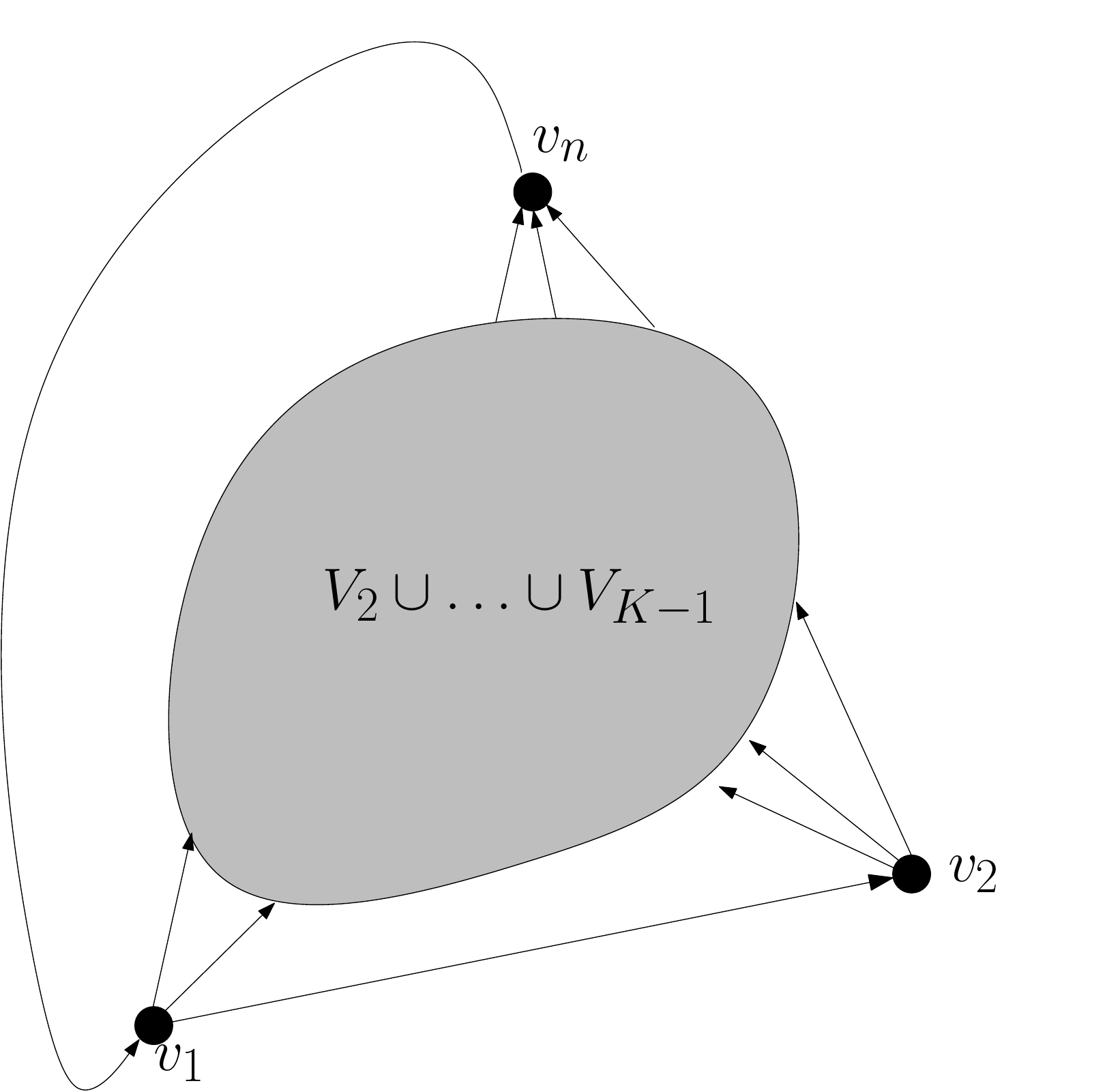}
\hspace*{\fill}
\includegraphics[scale=0.3,page=1]{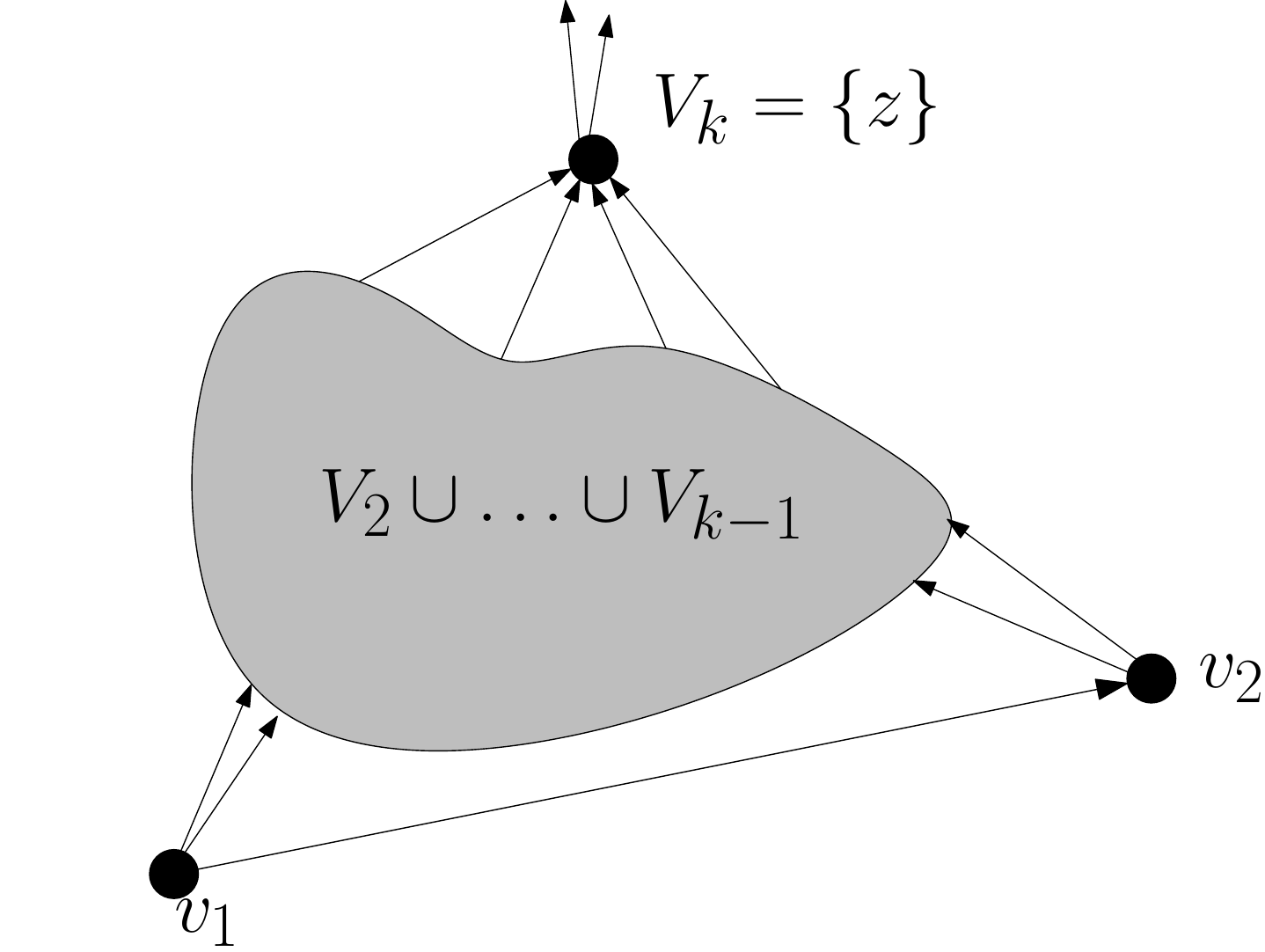}
\hspace*{\fill} \\
\hspace*{\fill} 
\includegraphics[scale=0.3,page=1]{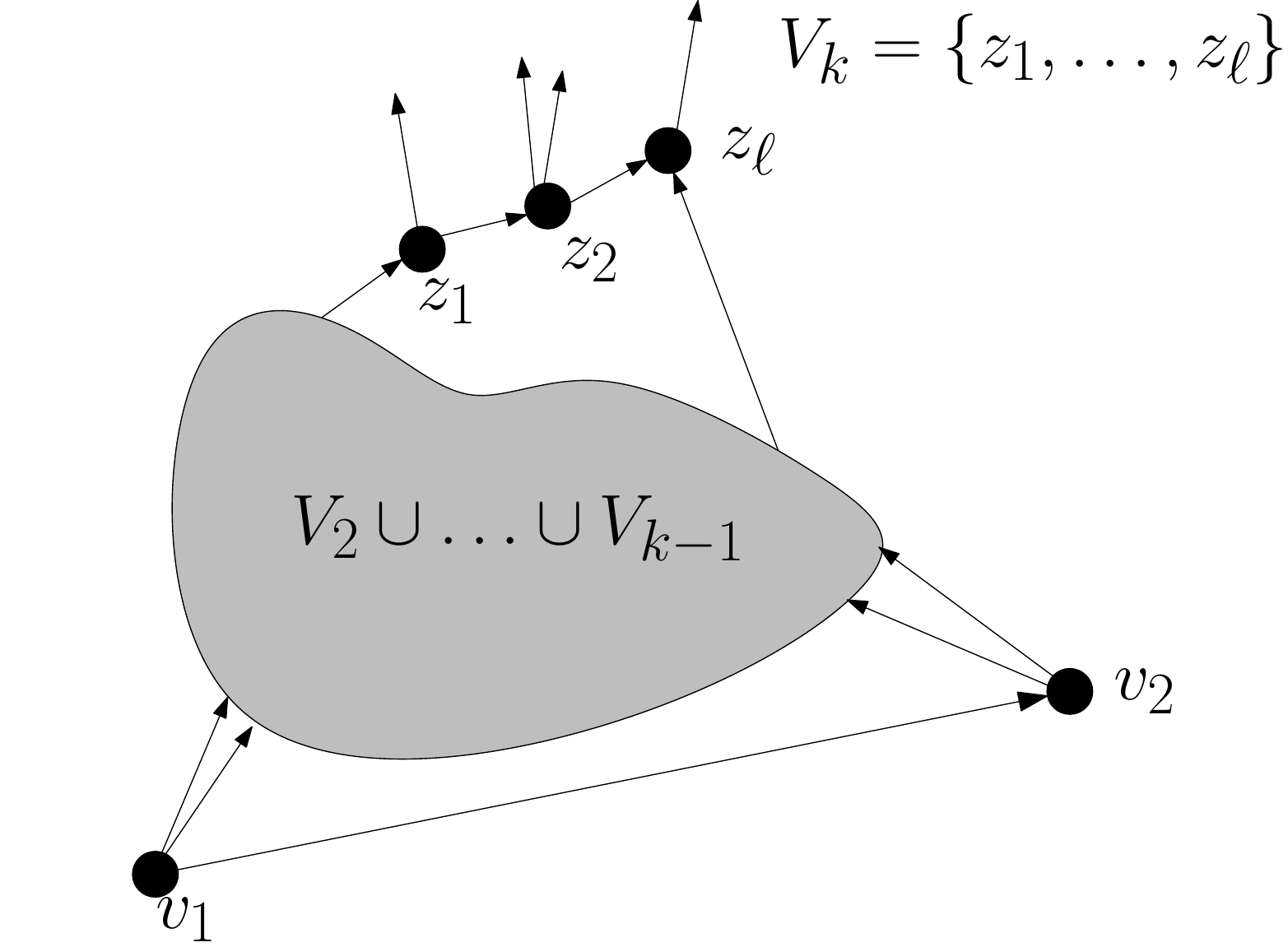}
\hspace*{\fill}
\includegraphics[scale=0.3,page=1]{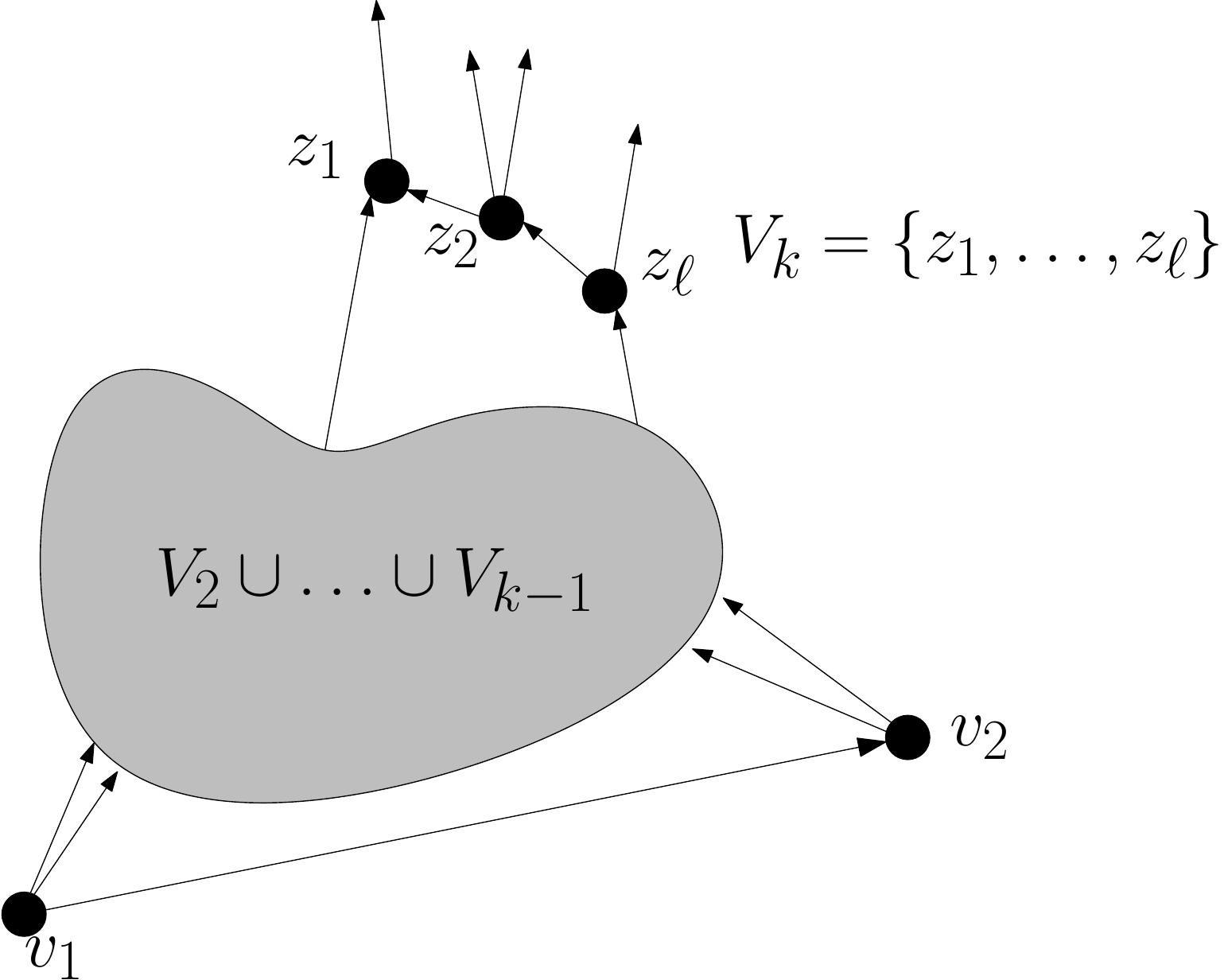}
\hspace*{\fill}
\caption{The canonical ordering with its implied edge directions
(defined in Section~\ref{se:edge_directions}.)
}
\label{fi:can_order}
\label{fig:can_order}
\end{figure}

Figure~\ref{fig:can_order} illustrates this definition.
A set $V_k$, $k=1,\dots,K$ is called a {\em group} of the canonical
ordering; a group with one vertex is a {\em singleton-group}, all other
groups are {\em chain-groups}.  
Edges with both ends in the same group are called {\em intra-edges}, all 
others are {\em inter-edges}.  Notice that when adding group $V_k$ for 
$k\geq 2$, there exists some faces (one for a chain-group, one or more for
a singleton-group) that are interior faces of $G[V_1\cup \dots \cup V_k]$
but were not interior faces of $G[V_1\cup \dots \cup V_{k-1}]$;
these faces are called the {\em faces completed by group $V_k$}.

Kant \cite{Kant96} showed that any 3-connected planar graph has such a canonical
ordering, even if the outer-face and the 2-path $v_n-v_1-v_2$ on
it to be used for the canonical ordering have been fixed.  
Furthermore, it can be found in linear time.

\subsection{Edge directions}
\label{se:edge_directions}

Given a canonical ordering, one naturally directs inter-edges from the
lower-indexed to the higher-indexed group.  For proving Barnette's theorem, it
will be useful to direct inter-edges as well as follows:

\begin{definition}
Given a canonical ordering, enumerate the vertices as
$v_1,\dots,v_n$ as follows.  
Group $V_1$ consists of $v_1$ and $v_2$.  For $2\leq k\leq K$,
let $s = |V_1|+\dots+|V_{k-1}|$.  
\begin{itemize}
\item If $V_k$ is a singleton group $\{z\}$,
	then set $v_{s+1}:= z$.
\item If $V_k$ is a chain-group
	$z_1,\dots,z_\ell$, then let $v_h$ and
	$v_i$ be the neighbours of $z_1$ and $z_\ell$ in 
	$V_1\cup\dots\cup V_{k-1}$,
	respectively.  
	If $h<i$, then set $v_{s+j}:=z_j$ for $j=1,\dots,\ell$,
	else set $v_{s+j}:=z_{\ell-j+1}$ for $j=1,\dots,\ell$.
\end{itemize}
\end{definition}

Let $\idx(v)$ be the index of vertex $v$ in this enumeration.  
Consider edges to be directed from
the lower-indexed to the higher-indexed vertex, with the
exception of edge $(v_1,v_n)$, which we direct $v_n\rightarrow v_1$.
These edge directions are illustrated in
Figure~\ref{fig:can_order}, with higher-indexed vertices drawn with
larger $y$-coordinate.

\begin{observation}
\label{obs:properties_directed}
(1) Every vertex has, in its clockwise order of incident edges,
a non-empty interval of incoming edges followed by a non-empty
 interval of outgoing edges.

(2) The edges on each of the two faces incident to $(v_1,v_n)$ form a directed
	cycle.

(3) For every face not incident to $(v_1,v_n)$, the incident edges form two 
	directed paths.
\end{observation}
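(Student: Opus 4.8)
The plan is to derive all three statements from properties of the ``upward'' orientation $\vec{G}$ that points every edge from its lower-$\idx$ endpoint to its higher-$\idx$ endpoint; this is exactly the orientation of the statement except that $\vec{G}$ orients $(v_1,v_n)$ as $v_1\to v_n$. First I would check that $\vec{G}$ is a planar $st$-graph with source $v_1$ and sink $v_n$. It is acyclic because every edge increases $\idx$. Its only source is $v_1$: every other vertex has a lower-$\idx$ neighbour, namely the back-neighbour if the vertex lies in a singleton group or is an endpoint of a chain group, and $z_{j-1}$ or $z_{j+1}$ for an interior vertex $z_j$ of a chain group (here the enumeration rule matters: it makes $\idx$ monotone along the chain, so no interior chain vertex is a local $\idx$-minimum). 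Its only sink is $v_n$: the canonical ordering forces every vertex of a group $V_k$ with $k<K$ to have a neighbour in a later group, hence a higher-$\idx$ neighbour.

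Next I would prove the local form of~(1) for $\vec{G}$: at every vertex the incoming edges form a clockwise-contiguous block followed by the outgoing edges. The argument tracks, for a fixed vertex $v$, the edges incident to $v$ as the groups are added. When $v$ is placed (as part of some $V_k$) its edges are the back-neighbour edge(s), which are incoming, together with, if $v=z_j$ lies in a chain group, the at most two intra-edges to $z_{j-1},z_{j+1}$; by the monotone chain labelling these read ``incoming, then outgoing'', the only exception being the single ``peak'' of each chain, where both structural edges are incoming. A short case check (interior chain vertex, chain endpoint, singleton, and $v_2$) shows the required block pattern holds at this moment. Every edge $v$ acquires afterwards runs to a later group and is therefore outgoing; moreover, since later groups are glued inside the current outer face, every such edge enters the one corner of $v$ that faces the outer face, so these edges form a single clockwise-contiguous block, adjacent to the block of initial outgoing edges (and at a peak they split the cyclic order into one incoming and one outgoing block). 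With the rotation property and acyclicity in hand, I would then invoke---or reprove by the same bookkeeping---the classical fact that in a planar $st$-graph every face is bounded by two directed paths that meet only at a unique ``source-switch'' and a unique ``sink-switch'' of that face. The one delicate point is that the stretch of the previous outer boundary that gets covered when $V_k$ is added is itself a directed path: an interior vertex of that stretch that were a local sink of $G[V_1\cup\dots\cup V_{k-1}]$ would, by the block property, have all its edges incoming, would be buried (it is not a back-neighbour of $V_k$) without gaining an outgoing edge, and would thus remain a sink of $G$, contradicting the uniqueness of $v_n$. Together with the monotone chain labelling, this makes each completed face a union of two directed paths.

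Finally I would pass from $\vec{G}$ to $G$ by reversing the single edge $(v_1,v_n)$. This changes rotations only at $v_1$ and at $v_n$. Since all edges of $v_1$ leave it in $\vec{G}$ and $(v_1,v_n)$ is one of the two edges of $v_1$ bounding the outer face, after the reversal the clockwise order around $v_1$ still reads ``one incoming edge, then a block of outgoing edges''; the situation at $v_n$ is symmetric; all other vertices are unchanged; and now every vertex has a non-empty incoming block and a non-empty outgoing block, because in $\vec{G}$ the only vertices missing one were $v_1$ and $v_n$. This gives~(1). For~(2): each of the two faces at $(v_1,v_n)$ has $v_1$ as a source-switch (all edges of $v_1$ leave it) and hence $v_n$ as the matching sink-switch, so its boundary in $\vec{G}$ is the edge $v_1\to v_n$ together with a directed path $v_1\leadsto v_n$; as $G$ is $3$-connected this boundary is a simple cycle, which the reversal turns into a directed cycle. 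For~(3): every other face keeps the boundary and edge directions it had in $\vec{G}$, so it is a union of two directed paths; the two faces at $(v_1,v_n)$ are exactly the ones already handled. The step I expect to require the most care is the second paragraph---propagating the ``incoming block, then outgoing block'' invariant through all insertion cases, and verifying that covered boundary stretches are directed paths.
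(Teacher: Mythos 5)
Your proposal is correct and follows essentially the same route as the paper: orient $(v_1,v_n)$ as $v_1\to v_n$, verify that the result is a bipolar (planar $st$-)orientation with source $v_1$ and sink $v_n$, invoke the classical facts about rotations and face boundaries in such orientations (the paper cites Tamassia--Tollis rather than reproving them), and then analyze the effect of reversing the single edge $(v_1,v_n)$ at $v_1$, $v_n$, and the two incident faces. The only difference is that you offer to reprove the $st$-graph properties by induction on the canonical ordering, which is extra (and somewhat delicate) work the paper avoids by citation.
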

\begin{proof}
For purposes of this proof only, consider edge $(v_1,v_n)$ to be directed
$v_1\rightarrow v_n$.  Then by properties of the canonical ordering,
every vertex except $v_1$ has at least one incoming edge, and
every vertex except $v_n$ has at least one outgoing inter-edge.
Therefore this orientation is {\em bi-polar}: it is acyclic 
with a single source $v_1$ and a single sink $v_n$.  
It is known \cite{TT86} that property (1) holds for all vertices $\neq v_1,v_n$
in a bi-polar orientation in a planar graph.  
Orienting edge $(v_1,v_n)$
as $v_n\rightarrow v_1$ also makes (1) hold at $v_1$ and $v_n$, since they
then have exactly one incoming/one outgoing edge.

In the bi-polar orientation, property (3) holds for any face $f$ \cite{TT86}.
Orienting edge $(v_1,v_n)$
as $v_n\rightarrow v_1$ will not change the property unless $f$ is incident
to $(v_1,v_n)$.  If $f$ is incident to $(v_1,v_n)$, then $v_1$ (as a source)
was necessarily the beginning and $v_n$ was necessarily the end of the two
directed paths.  
Orienting edge $(v_1,v_n)$
as $v_n\rightarrow v_1$ therefore turns the two directed paths into one
directed cycle.  So (2) holds.
\end{proof}

Define the {\em first} and {\em last}
outgoing edge to be the first and last edge in the clockwise order
around $v$ that is outgoing; this is well-defined by 
Observation~\ref{obs:properties_directed}(1).
Also define the following:

\begin{definition}
For any vertex $v_i$, $i\geq 2$, let the {\em parent-edge} be the
incoming edge $v_h\rightarrow v_i$ for which $h$ is maximized.
\end{definition}

If $e=v\rightarrow w$ is a directed edge, then $w$ is the {\em head}
of $e$, $v$ is the {\em tail} of $e$, and $v$ is a
{\em predecessor} of $w$.   The {\em left face} of $e$ is the face
to the left when walking from the tail to the head, and the {\em right
face} of $e$ is the other face incident to $e$.
The predecessor at the parent-edge of $w$ is called the {\em parent} of $w$.
The {\em predecessors of group $V_k$} are all vertices that
are predecessors of some vertex in $V_k$.

%---------------------------------------------------------------------------
\subsection{Edge labels}
\label{se:labelling}

To read trees from the canonical ordering, it helps to assign labels 
to the edges incident to a vertex. 
They are very similar to Felsner's triorientation derived 
from Schnyder labellings \cite{Fel01} (which in turn can easily be derived from 
the canonical ordering \cite{MAN05}), but differ slightly in the handling 
of intra-edges and edge $(v_1,v_n)$.

\begin{definition}
Given a canonical ordering, label the edge-vertex-incidences as follows:
\begin{itemize}
\item Assume $V_k$ is a singleton-group $\{z\}$ with $2\leq k \leq K$.
\begin{itemize}
\item The first incoming edge of $z$ (in clockwise order) is labelled SE.
\item The last incoming edge of $z$ (in clockwise order) is labelled SW.
\item All other incoming edges of $z$ are labelled S.
\end{itemize}
\item Assume $V_k$ is a chain-group $\{z_1,\dots,z_\ell\}$ with $2\leq k < K$.
\begin{itemize}
\item The incoming inter-edge of $z_1$ is labelled SW at $z_1$.
\item The incoming inter-edge of $z_\ell$ is labelled SE at $z_\ell$.
\item Any intra-edge $(z_i,z_{i+1})$ is labelled E at $z_i$ and W at $z_{i+1}$.
\end{itemize}
\item Edge $v_1\rightarrow v_2$ is labelled E at $v_1$ and W at $v_2$.
\item Edge $v_n\rightarrow v_1$ is labelled S at $v_1$.  
\item If an inter-edge $v\rightarrow w$ is labelled SE / S / SW at $w$,
	then label it NW / N / NE at $v$.
\end{itemize}
Call an edge an {\cal L}-edge (for ${\cal L}\in \{S,SW,W,NW,N,NE,E,SE\}$) if it is 
labelled ${\cal L}$ at one endpoint.
\end{definition}
 
\begin{figure}[ht]
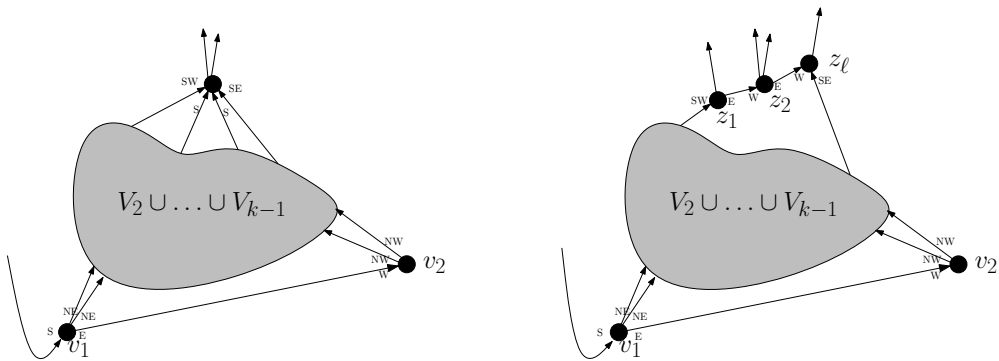

%\hspace*{\fill}
%\includegraphics[scale=0.4,page=2]{can_order_12n.pdf}
\hspace*{\fill}
\includegraphics[scale=0.4,page=2]{can_order_singleton.pdf}
\hspace*{\fill} 
\includegraphics[scale=0.4,page=2]{can_order_chain_up.pdf}
\hspace*{\fill}
\caption{The canonical ordering with its implied edge labelling.}
\label{fig:can_order_label}
\end{figure}

See Figure~\ref{fig:can_order_label} for an illustration of this labelling.
The following properties are easily verified (see also \cite{CK97}
and \cite{Fel01} for similar results):

\begin{lemma}
\label{lem:label_properties}
\begin{itemize}
\item
At each vertex there are, in clockwise order, some edges labelled S, 
at most one edge labelled SW, at most one edge labelled W, 
some edges labelled NW, at most one
edge labelled N, some edges labelled NE, at most one edge labelled E,
and at most one edge labelled SE.
\item 
An edge is an intra-edge if and only if it is labelled E at one endpoint
and W at the other.
\item
No vertex has an edge labelled W and an edge labelled SW.
\item
No vertex has an edge labelled E and an edge labelled SE.
\end{itemize}
\end{lemma}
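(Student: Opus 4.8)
The plan is to prove each of the four bullet points of Lemma~\ref{lem:label_properties} by a direct case analysis that tracks where each label can occur around a vertex $v$, using Observation~\ref{obs:properties_directed}(1): around $v$ the incident edges split, in clockwise order, into a non-empty block of incoming edges followed by a non-empty block of outgoing edges. Observe that the labels $S,SW,W,SE$ (at $v$) are exactly the ones assigned to \emph{incoming} edges of $v$ or to the special edges $v_n\to v_1$ and $v_1\to v_2$ (seen from the low endpoint), while $NW,N,NE,E$ are assigned to \emph{outgoing} edges of $v$ (again treating $(v_1,v_2)$ and $(v_1,v_n)$ suitably). So the clockwise order ``$S$-block, then $SW$/$W$, then $NW$-block, then $N$, then $NE$-block, then $E$, then $SE$'' is mostly a statement about the clockwise order within the incoming block and within the outgoing block, with $SE$ being the ``wrap-around'' label that is first clockwise among incoming edges.

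\textbf{First bullet.} I would fix $v=v_i$ and consider which of the two rules created its incident labels. If $v$ is the unique vertex of a singleton-group $V_k$, $k\ge 2$: its incoming edges are, in clockwise order, SE, then some S's, then SW (by the singleton rule), and its outgoing edges all become N/NE/NW at $v$ according to how the \emph{other} endpoint labelled them; I must check that among $v$'s outgoing edges the ones labelled NW come clockwise-first, then at most one N, then the NE's, possibly then E (if $v=v_1$) and that is consistent with Observation~\ref{obs:properties_directed}(1). The labels N/NW/NE at $v$ are inherited from the S/SW/SE label at the heads, so I need: the edge $v\to w$ that is labelled NW at $v$ is the one where $v$ is the \emph{counter-clockwise-most} predecessor of $w$ contributing an SW, etc. This is where the bookkeeping is heaviest: one must relate the clockwise order of outgoing edges at $v$ to the clockwise order of incoming edges at the various heads $w$, using planarity (the embedding is consistent). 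For the chain-group case and for $v\in\{v_1,v_2,v_n\}$ I would handle the extra labels E, W, S on $(v_1,v_2)$ and $(v_n,v_1)$ separately, checking they land in the claimed slots (E last among outgoing at $v_1$; W first among... actually W at $v_2$ sits between the S-block and the NW-block, consistent with $(v_1,v_2)$ being an incoming edge of $v_2$ that is the clockwise-last incoming edge after any SW — but $v_2$ has no SW, which is exactly the third bullet).

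\textbf{Second bullet.} An edge is labelled E at one end and W at the other iff it is an intra-edge: by the labelling rules, the only edges ever receiving label E are the intra-edges $(z_i,z_{i+1})$ (labelled E at $z_i$) and the edge $v_1\to v_2$ (labelled E at $v_1$); symmetrically for W. So I just note $v_1\to v_2$ is itself an ``honorary intra-edge'' of the first group $V_1=\{v_1,v_2\}$, or state the bullet as holding for all edges other than $(v_1,v_2)$ if the paper prefers; I'd check the intended convention against the definition and phrase accordingly. Either way this bullet is immediate from the definition since no inter-edge is ever assigned E or W.

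\textbf{Third and fourth bullets.} These are the cleanest. A vertex $v$ has an edge labelled SW only if $v$ is $z_1$ of a chain-group (its incoming inter-edge) or $v$ is a singleton with $\ge 2$ incoming edges (its clockwise-last incoming edge) --- and in the singleton case the SW edge \emph{is} the clockwise-last incoming edge, so there is no further incoming edge to be labelled W; in the chain case $z_1$ has exactly one incoming inter-edge (by the canonical-ordering definition, $z_1$ has exactly one neighbour in earlier groups) and its only other incident edges are the intra-edge to $z_2$ (labelled E at $z_1$) and outgoing inter-edges, so again no W. A label W at $v$ means $v=z_{i+1}$ for some chain-group, i.e. $v$ has an intra-edge coming from its clockwise-predecessor $z_i$; such a $v$ is $z_2,\dots,z_\ell$, and for these the only incoming inter-edge, if any, is the SE at $z_\ell$, never an SW --- so no vertex carries both W and SW. The fourth bullet is the mirror image (SE versus E), proved the same way with ``clockwise'' replaced by ``counter-clockwise'' and $z_1\leftrightarrow z_\ell$. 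I expect the main obstacle to be the first bullet's verification that inherited N/NW/NE labels on outgoing edges appear in the claimed clockwise order; the key fact I would lean on is that for a vertex $w$, reading its incoming edges clockwise as SE, S, \dots, S, SW and looking at each tail $v$, the position of that edge in $v$'s clockwise outgoing order is monotone --- a consequence of the fixed planar embedding and the acyclic bi-polar structure from Observation~\ref{obs:properties_directed}.
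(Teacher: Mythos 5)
The paper gives no proof of this lemma (it is declared ``easily verified''), so your proposal must stand on its own. Your handling of the last three bullets is essentially correct and complete: E and W are only ever assigned along intra-edges, and $(v_1,v_2)$ \emph{is} an intra-edge since $V_1=\{v_1,v_2\}$ is a group, so no special convention is needed; and the vertices that can receive SW (singleton-group vertices and chain-heads $z_1$) are disjoint from those that can receive W ($v_2$ and chain vertices $z_{i+1}$, $i\ge 1$), with the E/SE case being the mirror image. The same style of argument also disposes of the ``at most one'' claims for SW, SE, W and E in the first bullet.

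The genuine gap is the core of the first bullet, which you correctly locate but do not close: that the \emph{outgoing inter-edges} of a fixed vertex $v$, read clockwise, are labelled NW, \dots, NW, then at most one N, then NE, \dots, NE. (Note that ``at most one N'' is itself non-obvious: a priori $v$ could have two outgoing edges that are both middle incoming edges of their respective singleton heads.) The ``key fact'' you propose to lean on is quantified the wrong way around --- it fixes one head $w$ and varies its tails, whereas the claim concerns one tail $v$ and its several heads --- and it is asserted, not proved. The argument that works is face-based: for an inter-edge $v\to w$ with $w\in V_k$, the label at $w$ is SE, S or SW according to whether the face completed by $V_k$ lies to the left of the edge, on both sides, or to the right; then for clockwise-consecutive outgoing edges $e_j,e_{j+1}$ of $v$ with common face $f$, Observation~\ref{obs:properties_directed}(3) makes $v$ the unique source of $f$, and having $e_j$ labelled N or NE while $e_{j+1}$ is labelled N or NW would force both heads to lie in the group completing $f$ and to be attached to $v$ from below, contradicting simplicity (singleton group) or $2$-connectivity of $G[V_1\cup\dots\cup V_k]$ (chain group). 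This excludes every violation of the pattern at once. A (much easier) companion check, also absent from your sketch, is the placement of the intra-edges and the special edges in the cyclic order, e.g.\ that at an interior chain vertex the W- and E-edges are clockwise-consecutive with the completed face between them, so nothing intrudes into the forbidden arc of the compass order.
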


%%%%%%%%%%%%%%%%%%%%%%%%%%%%%%%%%%%%%%%%%%%%%%%%%%%%%%%%%%%%%%%%%%%%%%%%
\section{Barnette's theorem via the canonical ordering}
\label{se:Barnette}

We now show that Barnette's theorem has a proof where
the tree can be read from a canonical ordering.

\begin{theorem}
\label{thm:Barnette}
Let $G$ be a planar graph with a canonical ordering.
Then the parent-edges forms a spanning tree of maximum degree 3. 
\end{theorem}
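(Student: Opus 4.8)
The plan is to verify two things: that the parent-edges form a spanning tree, and that the resulting tree has maximum degree at most $3$. For the first part, I would argue that the subgraph $T$ consisting of all parent-edges has exactly $n-1$ edges (one parent-edge per vertex $v_i$ with $i \geq 2$, and these are distinct since each is the incoming parent-edge of a distinct head), and that $T$ is acyclic. Acyclicity follows because every parent-edge $v_h \to v_i$ has $h < i$ in the enumeration (parent-edges are incoming, and with edge $(v_1,v_n)$ reoriented as $v_1 \to v_n$ the orientation is bi-polar hence acyclic, as established in the proof of Observation~\ref{obs:properties_directed}); so $T$ is a subgraph of an acyclic orientation, and a subgraph with $n-1$ edges on $n$ vertices that is acyclic is a spanning tree. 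I should double-check the boundary case of edge $(v_1,v_n)$: its parent role is for $v_n$, and since we took the orientation $v_1 \to v_n$ for the acyclicity argument, nothing breaks; $v_1$ itself has no parent-edge, consistent with the count.

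For the degree bound, the key observation is to count, for a fixed vertex $w$, how many edges incident to $w$ can be parent-edges. Exactly one of them is the parent-edge \emph{of} $w$ itself (the incoming edge maximizing the tail index), contributing $1$ to $\deg_T(w)$. The remaining parent-edges at $w$ are those for which $w$ is the parent of some other vertex, i.e., outgoing edges $w \to v_i$ that happen to be the parent-edge of $v_i$. So I need to bound the number of outgoing edges of $w$ that serve as parent-edges of their heads. Here I would use the edge labels from Section~\ref{se:labelling} together with Observation~\ref{obs:properties_directed}(1): the outgoing edges of $w$ form a contiguous interval in clockwise order, and I claim that an edge $w \to v_i$ is the parent-edge of $v_i$ only if it is (at the endpoint $v_i$) the last incoming edge in clockwise order of $v_i$ — equivalently only if it is labelled SW or SE or E or N at $v_i$ in a suitable way, i.e., it is the ``rightmost'' incoming edge at $v_i$. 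The point is that the parent-edge maximizes the tail index, and by the structure of the canonical ordering the incoming edge with the largest-indexed tail is an extreme one in the cyclic order around the head. From $w$'s perspective, being the ``rightmost incoming edge'' of the head translates to being an extreme outgoing edge of $w$ — either the first or the last outgoing edge of $w$. That gives at most $2$ such edges, for a total of at most $1 + 2 = 3$.

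The main obstacle I anticipate is making precise the claim that an outgoing edge of $w$ can be a parent-edge of its head only if it is the first or last outgoing edge of $w$ in clockwise order. This needs a careful local analysis at the head vertex $v_i$: among all incoming edges of $v_i$, the parent-edge is the one whose tail has the maximum index, and I must show this tail-index-maximizing edge is the first or the last incoming edge of $v_i$ in clockwise order (not one in the ``middle''). This should follow from the definition of the enumeration together with the geometry of how groups are added: when $V_k$ containing $v_i$ is processed, the predecessors of $V_k$ appear consecutively on the outer face of $G[V_1 \cup \dots \cup V_{k-1}]$, and the one that was added latest (largest index) is at one of the two ends of this consecutive run — this is exactly where conditions 1 and 2 of the canonical ordering definition pin down the neighbours of $z$ (resp.\ $z_1,z_\ell$) in earlier groups. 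Translating ``end of the run of predecessors'' into ``first or last in the clockwise incidence order at $v_i$'' and then back into ``first or last outgoing at $w$'' is the technical heart; once that is done, the counting argument above closes the proof. I expect the edge-label bookkeeping of Lemma~\ref{lem:label_properties} to streamline this: concretely, I would show a parent-edge is always labelled (at its head) SW, SE, or is the reoriented edge $v_1 \to v_n$ labelled S at $v_1$, and at its tail is labelled NE or NW, i.e., is the first or last outgoing edge there.
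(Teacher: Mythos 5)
Your overall architecture matches the paper's: the spanning-tree part (one incoming parent-edge per vertex $v_i$, $i\geq 2$, acyclicity from the index-respecting orientation) is fine, and the degree bound is to come from ``$1$ for the parent-edge of $w$ plus at most $2$ for outgoing parent-edges, because an outgoing edge of $w$ that is a parent-edge of its head must be the first or last outgoing edge of $w$.'' That last claim is exactly the paper's Lemma~\ref{lem:parent_first_last}, and you correctly identify it as the technical heart. But the justification you sketch for it does not work, for two reasons. First, your intermediate claim that the parent-edge is the first or last \emph{incoming} edge at its head $v_i$ is false: for a singleton group $\{z\}$ the parent is the predecessor of maximum index, and that predecessor can be a ``middle'' one. (Example: $V_1=\{v_1,v_2\}$, then singletons $a$, $c$, $b$ added in that order with $b$ sitting between $c$ and $a$ on the outer face, then $z$ adjacent to all of $v_1,c,b,a,v_2$; the parent-edge $b\rightarrow z$ is labelled S at $z$, neither first nor last incoming.) The paper's proof explicitly allows the parent-edge to be labelled SW \emph{or S} at its head. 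Second, even for an edge that \emph{is} extreme in the rotation at its head (say labelled SW at $w$, hence NE at $v$), extremality does not ``translate'' to the tail: a vertex may have several NE-labelled outgoing edges (Lemma~\ref{lem:label_properties} only bounds W, N, E, SW, SE by one each), so an NE-edge need not be the last outgoing edge of $v$. So both directions of your proposed translation fail, and the counting argument is left unsupported.

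The missing ingredient is Observation~\ref{obs:properties_directed}(3): every face not incident to $(v_1,v_n)$ is bounded by exactly two directed paths with a common source and a common sink. The paper's argument takes the face $f$ between the parent-edge $v\rightarrow w$ and a clockwise-neighbouring incoming edge $x\rightarrow w$ of $w$; if $v$ had no incoming edge on $f$, then $v$ would be the common source of the two boundary paths of $f$, giving a directed path from $v$ to $x$ and hence $\idx(x)>\idx(v)$, contradicting that $v$ maximizes the tail index. Therefore $v$ has an incoming edge on $f$, which forces $v\rightarrow w$ to be the last (resp.\ first) outgoing edge of $v$. A separate, easier case handles chain-groups, where intra-edges are parent-edges (labelled E/W, which your head-label list also omits) and the single incoming inter-edge of the chain is handled by the same face argument. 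Without this face-based step your plan has a genuine gap.
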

\begin{proof}
Let $T$ be the set of parent edges.
First note that each vertex $v_2,\dots,v_n$ has exactly one incoming
edge in $T$, and there is no directed cycle since $(v_1,v_n)$ is not
a parent-edge and therefore edges are directed
according to indices.  So $T$ is indeed a spanning tree.  
To see the bound on the maximum degree, the following lemma suffices:

\begin{lemma}
\label{lem:parent_first_last}
Assume $v\rightarrow w$ is a parent-edge of $w$.  Then either
$v\rightarrow w$ is the first outgoing edge at $v$ and labelled W or NW or N
at $v$,
or $v\rightarrow w$ is the last outgoing edge at $v$ and labelled E or NE or N
at $v$.
\end{lemma}
\begin{proof}
$w=v_1$ is impossible since $v_1$ has no parent.
If $w=v_2$, then its parent-edge $v_1\rightarrow v_2$ is the last
outgoing edge of $v_1$ and labelled W, so the claim holds.
Now consider $w=v_i$ for some $i\geq 3$, which means that
$w$ belongs to some group $V_k$ for $k\geq 2$.
There are two cases:
\begin{itemize}
\item $V_k$ is a chain-group $z_1-\dots-z_\ell$, which implies $k<K$.
	Assume that the chain is directed $z_1\rightarrow \dots
	\rightarrow z_\ell$; the other case is symmetric.  
	Note that $z_i$ is the parent of $z_{i+1}$
	for $1\leq i < \ell$, and $z_{i}\rightarrow z_{i+1}$ is the
	last outgoing edge of $z_i$ and labelled E, 
	so the claim holds for $w\in \{z_2,\dots,z_\ell\}$.

	Consider $w=z_1$.
	The parent $v$ of $z_1$ is the predecessor of $V_k$ adjacent to $z_1$.
	Let $x$ be the other predecessor of $V_k$ (it is adjacent to $z_\ell$).
	The direction of the chain implies $\idx(v)>\idx(x)$.
	Let $f$ be the face completed by $V_k$ and observe that it does
	not contain $(v_1,v_n)$.
	By Observation~\ref{obs:properties_directed}(3) the boundary of
	$f$ consists of two directed paths, which both end at $z_\ell$.  The 
	vertex where these two paths begin cannot be $v$, otherwise there 
	would be a directed path from $v$ to $x$ and therefore $idx(x)>idx(v)$.
	So $v$ has at least one incoming edge on face $f$, and hence
	$v\rightarrow z_1$ is its last outgoing edge.  Also,
	this edge is labelled SW at $z_1$, hence NE at $v$, as desired.

\item $V_k$ is a singleton-group $\{z\}$ with $z=w$.  Let $x\rightarrow w$ be
	an incoming edge of $w$ that comes before or after $v\rightarrow w$
	in the clockwise order of edges at $w$.  
	Such an edge must exist since $w$
	has at least two incoming edges (this holds for
	$w=v_n$ by 3-connectivity).  Assume that $x\rightarrow w$ comes
	clockwise before $v\rightarrow w$; the other case is similar.
	
	Let $f$ be the face incident to
	edges $v\rightarrow w$ and $x\rightarrow w$.
	By construction $f$ is not incident to $(v_1,v_n)$, and
	by Observation~\ref{obs:properties_directed}(3) the boundary of
	$f$ consists of two directed paths, which both end at $w$.  The 
	vertex where these two paths begin cannot be $v$, otherwise there 
	would be a directed path from $v$ to $x$, hence $\idx(x)>\idx(v)$
	contradicting the definition of parent-edge $v\rightarrow w$.
	So $v$ has at least one incoming edge on face $f$.
	$v\rightarrow w$ is its last outgoing edge at $v$.
	Furthermore, $v\rightarrow w$ cannot be labelled SE at $w$
	(since $x\rightarrow w$ comes clockwise before it), so it
	is labelled SW or S at $w$, hence NE or N at $v$ as desired.
\end{itemize}
\end{proof}

So in $T$, every vertex is incident to at most three edges: the parent-edge,
the first outgoing edge, and the last outgoing edge.
This finishes the proof of Theorem~\ref{thm:Barnette}.  
\end{proof}

In a later paper \cite{Bar92}, Barnette strengthened his own theorem
to show that in addition one can pick one vertex and require that it
has degree 1 in the spanning tree.  Using the canonical ordering allows
us to strengthen this result even further:  All vertices on one face
have degree at most 2, and two of them can be required to
have degree 1.

\begin{corollary}
Let $G$ be a planar graph with vertices $u,w$ on a face $f$ such
that $G\cup (u,w)$ is 3-connected.  Then $G$ has a spanning
tree $T$ with maximum degree 3 such that $\deg_T(u)=1=\deg_T(w)$,
and all other vertex $x$ on face $f$ have $\deg_T(x)\leq 2$.
\end{corollary}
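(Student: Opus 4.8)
The plan is to reduce this corollary to Theorem~\ref{thm:Barnette} by choosing the outer-face and the starting $2$-path for a canonical ordering of $G$ in a way that aligns with the face $f$ and the two prescribed vertices $u,w$. Since $G\cup(u,w)$ is $3$-connected, $G$ itself is $2$-connected and the edge $(u,w)$ can be added inside face $f$ without destroying planarity; apply Kant's theorem to the $3$-connected graph $G\cup(u,w)$ with $f$ as outer-face (more precisely: choose the embedding so that $f$ is the outer-face) and with $u,w$ playing the roles of $v_1,v_2$ — this is legitimate because Kant allows the outer-face and the bounding $2$-path on it to be prescribed, and $(u,w)$ is an outer-face edge with $u,w$ consecutive on it. Then delete the edge $(u,w)$ again: what remains is a canonical-ordering-like decomposition of $G$, and the parent-edge tree $T$ of Theorem~\ref{thm:Barnette} is a spanning $3$-tree of $G\cup(u,w)$; since $(u,w)=(v_1,v_n)$ is never a parent-edge, $T$ is in fact a spanning $3$-tree of $G$.

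The next step is to read off the degree bounds on the boundary of $f$. In the enumeration of $G\cup(u,w)$ induced by this canonical ordering, the vertices on the outer-face (which is $f$) are exactly $v_1,v_2$ together with, for each $k$, the ``last-added'' vertices of $V_k$; more usefully, a vertex $x$ lies on $f$ iff it has no incoming edge $v\to x$ whose left face and right face both differ from $f$ — equivalently, $f$ is incident to $x$. For such a vertex $x\ne v_1,v_n$: its parent-edge is present in $T$, but I claim that at most one of its two other ``special'' edges from Lemma~\ref{lem:parent_first_last} — the first outgoing and the last outgoing edge — is relevant, because one of them bounds $f$. Concretely, since $f$ is the outer-face of $G\cup(u,w)$ and the orientation restricted to the boundary of $f$ is (by Observation~\ref{obs:properties_directed}(2), applied with $(v_1,v_n)=(u,w)$) a single directed cycle, each boundary vertex $x$ has exactly one incoming and one outgoing boundary edge on $f$. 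The outgoing boundary edge is either the first or the last outgoing edge of $x$ (it is the extreme one on the $f$ side), and a vertex is the tail of the parent-edge of its head only via its first or last outgoing edge; so at most one of the edges charged to $x$ in the proof of Theorem~\ref{thm:Barnette} can go ``away from $f$,'' giving $\deg_T(x)\le 2$. For $u=v_1$: $v_1$ has only its two outer-face edges, namely $v_1\to v_2$ (labelled $\mathrm{E}$, the last outgoing edge) and the edge $v_n\to v_1$ which in $G\cup(u,w)$ is exactly the deleted edge $(u,w)$; after deletion $v_1$ retains only $v_1\to v_2$ among tree-relevant edges, and $v_1\to v_2$ is a parent-edge of $v_2$, not of anything at $v_1$ — wait, we must also check $v_1$'s own parent-edge, but $v_1$ has none, so $\deg_T(v_1)=1$. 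Similarly $w=v_n$: its parent-edge is one incoming edge, and its only other potential tree-edges would be outgoing edges, but $v_n$ has exactly one outgoing edge in $G\cup(u,w)$, namely $v_n\to v_1=(u,w)$, which is deleted; hence $\deg_T(v_n)=1$.

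The main obstacle I anticipate is the bookkeeping in the previous paragraph: making precise the claim that for a boundary vertex $x$, the edge of $T$ ``charged to $x$ as a tail'' (i.e. the parent-edges pointing out of $x$, which by Lemma~\ref{lem:parent_first_last} use only the first and last outgoing edges of $x$) cannot simultaneously use \emph{both} the first and the last outgoing edge when one of those two is forced to lie on $f$. The cleanest way is probably to argue directly on labels: a boundary vertex $x\ne v_1,v_n$ has its outgoing $f$-edge labelled $\mathrm{NW}$ or $\mathrm{W}$ (if $f$ is on the clockwise-first side) or $\mathrm{NE}$ or $\mathrm{E}$ (if on the clockwise-last side), and in either case exactly one of the ``$\mathrm{W}/\mathrm{NW}/\mathrm{N}$ first-outgoing'' slot and the ``$\mathrm{E}/\mathrm{NE}/\mathrm{N}$ last-outgoing'' slot of Lemma~\ref{lem:parent_first_last} is occupied by this $f$-edge; that edge's head $y$ lies on $f$ too, but the parent-edge of $y$ — if it is this very edge — is then an edge of $T$ \emph{on} $f$, not one pulling $x$'s degree up away from $f$. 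One then checks that $x$ can be the tail of at most one parent-edge that is \emph{not} on $f$, so together with $x$'s own parent-edge (which may or may not be on $f$) we get $\deg_T(x)\le 2$. I would also double-check the corner case where $f$ is a triangle or where $u,w$ are already adjacent in $G$ (then no edge need be added, and one takes $(v_1,v_2)=(w,u)$ or $(u,w)$ with the pre-existing edge as $(v_1,v_n)$ if it is the other outer-face $v_1$-edge — a small case analysis, but routine).
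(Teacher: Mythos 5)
Your set-up is the same as the paper's: form $G^+=G\cup(u,w)$, take a canonical ordering of $G^+$ with $u=v_1$ and $w=v_n$, use the parent-edge tree of Theorem~\ref{thm:Barnette}, and note that $(v_1,v_n)$ is never a parent-edge so $T\subseteq G$; the arguments that $\deg_T(v_1)=\deg_T(v_n)=1$ are also fine. (Do fix the slip where you first let $u,w$ play the roles of $v_1,v_2$ --- with that choice $(u,w)=(v_1,v_2)$ \emph{is} a parent-edge and the whole reduction fails; your later text shows you really mean $w=v_n$.)

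The genuine gap is the $\deg_T(x)\le 2$ claim for the other boundary vertices. You correctly identify that each such $x$ can contribute to $T$ only via its own parent-edge plus its first and last outgoing edges, and that one of the latter two lies on $f$; but you never show that this boundary edge is \emph{not} in $T$. Your remark that if it is the parent-edge of its head then it is ``an edge of $T$ on $f$, not one pulling $x$'s degree up away from $f$'' does not help: an edge of $T$ lying on $f$ counts toward $\deg_T(x)$ exactly like any other, so your count still allows degree 3. The paper closes this hole with a label argument you are missing: the edge $(u,w)$ splits $f$ into $f_\ell$ and $f_r$, both of which are completed only by $V_K=\{v_n\}$, and since only a SW-edge can have a not-yet-completed face on its left, every edge of $f_\ell$ other than $(v_1,v_n)$ is a SW-edge; hence the first outgoing edge of any $x\ne v_n$ on $f_\ell$ is labelled NE at $x$, and by Lemma~\ref{lem:parent_first_last} a first outgoing edge can be a parent-edge only if labelled W, NW or N --- so that edge is excluded from $T$ (symmetrically with SE/NW on $f_r$). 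Your proposed fallback, asserting the boundary edge of $x$ is labelled NW or W on the clockwise-first side, is exactly backwards: with those labels Lemma~\ref{lem:parent_first_last} would \emph{allow} the edge to be a parent-edge, so even a completed version of your label analysis would not yield the bound.
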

\begin{proof}
Let $G^+=G\cup (u,w)$ and find a canonical
ordering of $G^+$ with $u=v_1$ and $w=v_n$.  Let $T$ be the
spanning 3-tree of $G^+$ obtained from the parent-edges; this will
satisfy all properties.

Observe that $(v_1,v_n)$ is not a parent-edge, so $T$ is a spanning
tree of $G$ as well.  Let $f_\ell$ and $f_r$ be the left and right
face of $v_n\rightarrow v_1$.  Both faces are completed by $V_K=\{v_n\}$.
It follows that any edge on $f_\ell$ (except $v_n\rightarrow v_1$) is
a SW-edge, because only such edges may have a not-yet-completed face
on their left.  Therefore
for any vertex $x\neq v_n$ on $f_\ell$ the first outgoing edge is labelled
NE and by Lemma~\ref{lem:parent_first_last} it does not belong to $T$.
So $\deg_T(x)\leq 2$ for all $x\in f_\ell$.  Similarly one shows that
$\deg_T(x)\leq 2$ for all $x\in f_r$.  Finally, $\deg_T(v_n)=1$ since
$v_n$ has no outgoing parent-edges, and $\deg_T(v_1)=1$ since all vertices
other than $v_2$ have higher-indexed predecessors.
\end{proof}

%%%%%%%%%%%%%%%%%%%%%%%%%%%%%%%%%%%%%%%%%%%%%%%%%%%%%%%%%%%%%%%%%%%%%%%%
\section{On Gr\"{u}nbaum's conjecture}
\label{se:Gruenbaum}

Figure~\ref{fig:3TreeNotGood} shows
an example of a graph where the 3-tree from Theorem~\ref{thm:Barnette}
yields a co-tree with unbounded degree.  So unfortunately 
the proof of Theorem~\ref{thm:Barnette} does not help to solve
Gr\"{u}nbaum's conjecture.
In this section, we show that every planar 3-connected graph $G$ has
a spanning tree $T$ such that both $T$ and its co-tree $T^*$ 
are 5-trees.  Tree $T$ will again be read from the canonical ordering, but 
with a different approach.  
Assume throughout this section that a canonical order
of $G$ has been fixed.

\begin{figure}[ht]
\hspace*{\fill}
\includegraphics[scale=0.4,page=1]{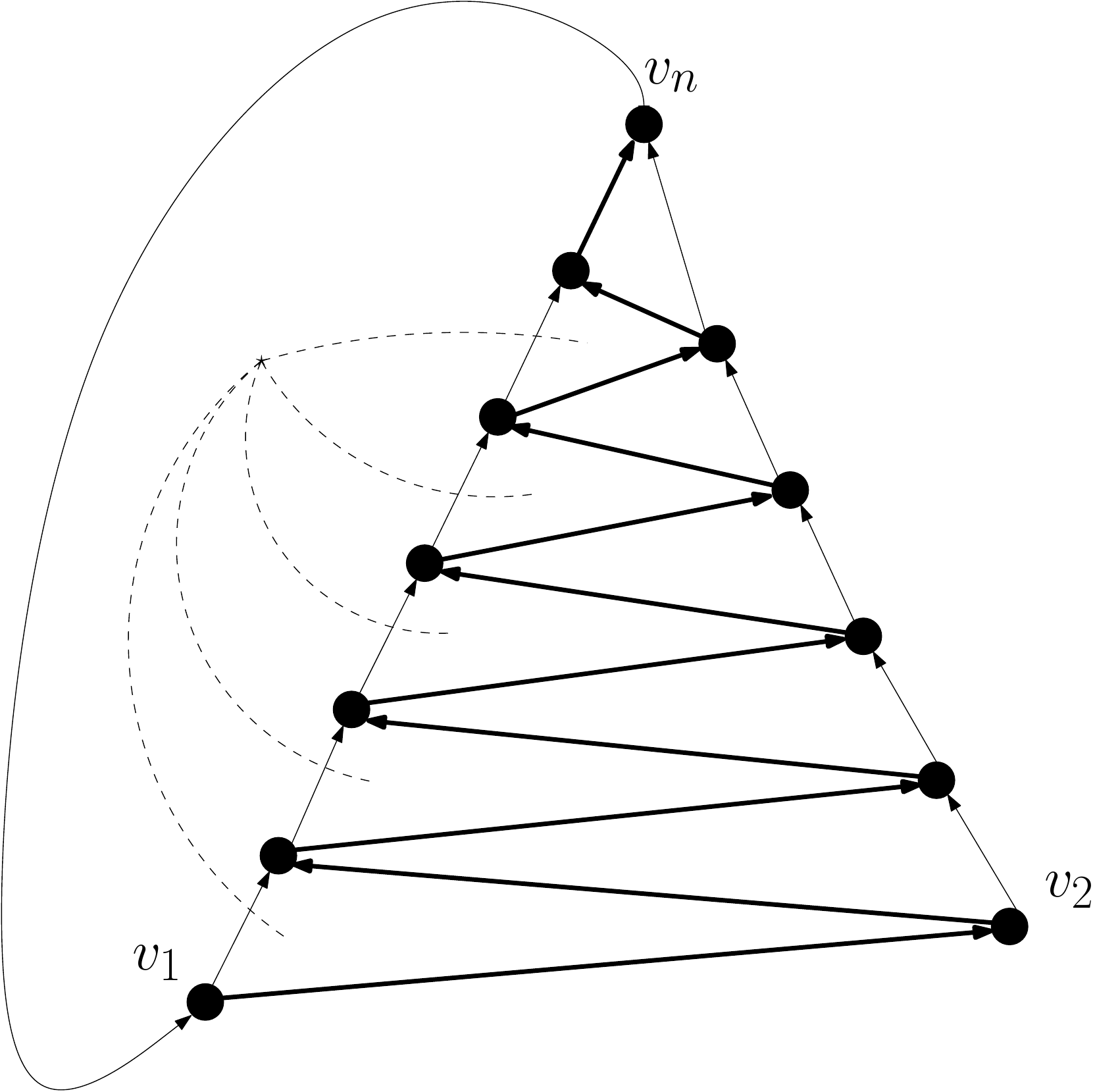}
\hspace*{\fill}
\caption{A planar 3-connected graph with the spanning 3-tree $T$ (bold) obtained
from the parent-edges.
Two faces have $\Theta(n)$ incident edges that are not in $T$. }
\label{fig:3TreeNotGood}
\end{figure}

\subsection{Dual canonical ordering}

A crucial insight is that a canonical ordering implies
a canonical ordering of the dual graph $G^*$.  This is
quite straightforward, but to our knowledge has not been published
explicitly.  An implicit proof follows from Felsner's result \cite{Fel04}
that the so-called Schnyder labelling of angles in a graph closely related
to $G$ corresponds to a Schnyder labelling of angles in a graph closely
related to $G^*$; combined with that Schnyder
labellings can be derived from canonical orderings and vice versa \cite{MAN05}.  
To avoid having to define these terms precisely, and because it will be 
important
how the edge labels of $G$ and $G^*$ relate, we give here an independent proof.

\begin{theorem}
\label{thm:dual_can_order}
\label{thm:dual_label}
For any canonical ordering of a 3-connected planar graph $G$,
there exists a canonical ordering of the dual graph $G^*$ such that
the following hold:
\begin{itemize}
\item The dual of any intra-edge of $G$ is a S-edge in $G^*$.
\item The dual of any S-edge of $G$ is an intra-edge in $G^*$.
\item The dual of any SW-edge $e$ of $G$ is a SE-edge in $G^*$,
	and directed from the left face of $e$ to the right face of $e$.
\item The dual of any SE-edge $e$ of $G$ is a SW-edge in $G^*$,
	and directed from the right face of $e$ to the left face of $e$.
\end{itemize}
\end{theorem}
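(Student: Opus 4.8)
The plan is to construct the dual canonical ordering explicitly from the primal one, and then verify its defining properties and the stated label correspondences. The key structural observation is that the primal edge directions give a bipolar orientation (with the special handling of $(v_1,v_n)$), and a bipolar orientation of a planar graph induces a bipolar orientation of the dual: rotating every edge direction by $90^\circ$ (say, so that the dual edge points from the right face of $e$ to the left face of $e$) yields an acyclic orientation of $G^*$ with a unique source and unique sink, namely the two faces incident to $(v_1,v_n)$. Concretely, I would let $f_\ell$ and $f_r$ be the left and right faces of $v_n \to v_1$; the plan is to show $f_r^*$ is the source and $f_\ell^*$ the sink of the dual orientation (or vice versa, depending on orientation conventions), and that $\{f_r^*\}$ should play the role of $V_1$'s first vertex and the dual edge $(f_\ell,f_r)^* $ plays the role of the base edge $(v_1^*,v_2^*)$ — here I may need a small adjustment since $G^*$'s ``$V_1$'' must be a $2$-element group, which corresponds to the fact that $(v_1,v_n)$ and one adjacent edge bound a face, so the two dual endpoints of that adjacent edge form the first group.

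Next I would read off the dual canonical ordering's group structure from the primal one. The groups of the dual ordering correspond naturally to the faces completed by primal groups, ordered by the primal group index; more precisely, when primal group $V_k$ is added, it completes one face (chain-group) or several faces (singleton-group), and these completed faces, listed in the appropriate clockwise order, form the corresponding dual group $W_{k'}$. The plan is to verify that each such $W_{k'}$ satisfies one of the two canonical-ordering conditions: for a primal chain-group of length $\ell$, the single completed face is a dual singleton-group (and it will have exactly two ``dual predecessors'' — the two dual edges that are duals of the two primal inter-edges bounding that face from below — plus at least one dual successor, by $3$-connectivity of $G^*$); for a primal singleton-group with $\ell$ completed faces, those faces form a dual chain-group of length $\ell$, and I would check the endpoint/interior neighbour-count conditions by tracking which dual edges cross into $W_1 \cup \dots \cup W_{k'-1}$. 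Here Observation~\ref{obs:properties_directed}(3) (each non-special face has its boundary split into two directed paths) is exactly what guarantees the dual group is an induced path with the right attachment structure.

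Finally, with the dual ordering and its induced edge directions in hand, the label correspondences follow by matching definitions. The plan is: intra-edges of $G$ are exactly the edges labelled E/W, i.e., the two directed boundary paths of a completed face of a chain-group meet along these; their duals are the ``non-first, non-last incoming edges'' of the dual singleton-group vertex, hence S-edges — and conversely an S-edge of $G$ (a middle incoming edge of a primal singleton) is dual to an edge internal to the corresponding dual chain-group, hence an intra-edge. For SW/SE: an SW-edge $e$ of $G$ has the property that its left face is completed strictly later than its right face along the primal order (this is the ``only SW-edges may bound a not-yet-completed face on their left'' principle used in the Corollary's proof), so in the dual order $e^*$ connects an earlier dual vertex (right face of $e$) to a later one (left face of $e$), and it sits as the first incoming edge — i.e.\ SE — at its dual head, giving ``SE in $G^*$, directed from the left face to the right face'' as claimed; the SE case is the mirror image. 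The main obstacle I anticipate is bookkeeping the clockwise/counter-clockwise conventions so that ``first/last incoming edge'' in the dual matches the geometric rotation of ``left/right face'' in the primal — getting SW$\leftrightarrow$SE rather than SW$\leftrightarrow$SW requires care with the direction of the $90^\circ$ rotation and with the reversal of orientation when passing to the dual. I would handle this by fixing the rotation once (dual edge points from right face to left face of the primal edge, matching the SW statement) and then checking every other case against that single convention.
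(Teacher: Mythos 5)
Your overall architecture matches the paper's: build the dual canonical ordering explicitly by taking, as dual groups, the sets of faces completed by each primal group (with the two faces of $(v_1,v_n)$ forming the first dual group, so the dual ordering runs through the primal groups in \emph{reverse}), and then verify the canonical-ordering conditions by counting incoming and outgoing dual edges at each completed face. That part of the plan is sound and is essentially what the paper does.

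The genuine gap is in your treatment of the dual edge directions. You propose to obtain the dual orientation by a \emph{uniform} $90^\circ$ rotation (``dual edge points from the right face of $e$ to the left face of $e$'') and to ``fix the rotation once.'' No single uniform rotation can work: the theorem asserts that the dual of a SW-edge is directed from the \emph{left} face to the \emph{right} face, while the dual of a SE-edge is directed from the \emph{right} face to the \emph{left} face --- opposite rotational senses. Concretely, for a chain-group $z_1,\dots,z_\ell$ completing the face $f$, the SW-edge at $z_1$ has $f$ as its right face and the SE-edge at $z_\ell$ has $f$ as its left face, yet both dual edges must be \emph{incoming} at $f^*$ (they are the first and last incoming edges of the dual singleton $f^*$); a uniform rotation would make one of them outgoing at $f^*$, which is incompatible with the group structure you yourself set up. Your derivation of the SW case is also internally inconsistent: having (correctly) noted that the left face of a SW-edge is completed later in the primal order, you call the right face the ``earlier dual vertex'' --- but since your first dual group consists of the faces completed \emph{last}, the dual order reverses the primal completion order, so the left face comes earlier in the dual; moreover the direction ``earlier (right face) to later (left face)'' that you write down contradicts the ``left face to right face'' conclusion you then assert. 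The rule that replaces the uniform rotation is: a dual inter-edge is directed from the face completed \emph{later} in the primal order to the face completed \emph{earlier}; combining this with ``a SW-edge has its right face completed first'' and ``a SE-edge has its left face completed first'' yields both direction claims simultaneously.
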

\begin{proof}
Embed $G^*$ such that $(v_1,v_2)^*$ and $(v_1,v_n)^*$ are on the outer-face.
Define $f_1$ to be the outer-face of $G$, $f_2$ to be the interior face
incident to $(v_1,v_n)$, and set $F_1:=\{f_1^*,f_2^*\}$.  Observe that
$(f_1^*,f_2^*)=(v_1,v_n)^*$ is an edge on the outer-face of $G^*$ as desired.
Define $f_\phi$ to be the interior face incident to $(v_1,v_2)$, where
$\phi=m-n+2$ is the number of faces of $G$.  Define $F_k:=\{f_\phi^*\}$.
Observe that $(f_\phi^*,f_1^*)=(v_1,v_2)^*$ is an edge on the outer-face of
$G^*$ as desired.

For $3\leq k<K$, if $V_k$ is a chain-group, then
let $f$ be the face completed by $V_k$ and define 
$F_{K-k+2}:=\{f^*\}$.  If $V_k$ is a singleton-group, then
let $f_1,\dots,f_d$ be the faces completed by adding $V_k$,
enumerated in clockwise order around the unique vertex in $V_k$,
and define $F_{K-k+2}:=\{f_1^*,\dots,f_d^*\}$.  This finishes
the description of the canonical ordering of the dual graph.  
See also Figure~\ref{fig:dual_order}. 

It remains to verify the properties.   We will not show
2-connectivity of $G^*[F_1\cup\dots F_k]$ for $k\geq 2$ directly; this
holds because $F_1$ induces an edge and (as will be shown) for any $k\geq 2$ 
$F_k$ has at least two distinct
predecessors.
%, and hence $F_1,\dots,F_k$ forms a so-called
%{\em ear decomposition}.  It is well-known that any graph
%with an ear decomposition is 2-connected.  
%Also note that $G^*[F_1\cup \dots \cup F_k]$ has $(f^*_1,f^*_2)$ on
%the outer-face, since this edge is on the outer-face of $G^*$
%by construction.

Consider the case where $F_k=\{f^*\}$ (for $2\leq k<K$) is a singleton-group.
This happens when $V_{k'}$, for $k'=K+2-k$, is a chain-group or a singleton-group
for which the vertex has two predecessors.  The latter can
be treated as a chain-group with only one vertex, so assume
that $V_{k'}=\{z_1,\dots,z_\ell\}$ (for $\ell\geq 1$) completed face $f$.
There are two kinds of edges at face $f$:  Those with both
endpoints in $V_1\cup \dots\cup V_{k'-1}$ (of which there is at least one
since $G[V_1\cup \dots \cup V_{k'}]$ is 2-connected), and
those with an endpoint in $V_{k'}$.  The latter edges all are on the
outer-face of $G[V_1\cup \dots \cup V_{k'}]$, hence their other incident
face is not yet completed and their dual edges are hence
incoming at $f^*$.  For the former edges the other face is 
completed (or, for $k'=2$, it is the outer-face), and so their dual edges 
are outgoing at $f^*$.   It follows that $f^*$ has at least one outgoing edge and
at least two incoming edges (since $V_{K+2-k}$ has at least two
predecessors.)  Moreover, the first incoming edge of $f^*$ is dual to
the SW-edge at $z_1$, and the last incoming edge of $f^*$ is dual to the
SE-edge at $z_\ell$.  This shows all claims for a singleton-group $F_k$.

Now consider the case where $F_k=\{f^*_{i_1},\dots,f_{i^*_\ell}\}$ for
some $\ell\geq 2$ is a chain-group.  This happens if $V_{k'}$, for $k'=K+2-k$,
is a singleton-group $\{z\}$, and $z$ had at
least three predecessors.  Since $F_k$ enumerates the faces in
clockwise order around $z$, they form a path in $G^*$.  For $1\leq h\leq \ell$,\
face $f_{i_h}$ has exactly two incident edges not in
$G[V_1\cup \dots \cup V_{k'-1}]$, namely, the two edges incident to $z$.
These are the only edges for which the duals could be incoming to
$f_{i_h}^*$.  But if $1<h<\ell$ then these dual edges are both 
intra-edges, so $f^*_{i_h}$ has no neighbour in $F_1\cup \dots \cup F_{k-1}$. 
For $h=1$ one of these dual edges is an intra-edge while the 
other is dual to the SE-edge of $z$.
Since the SE-edge of $z$ is on the outer-face of $G[V_1\cup \dots \cup V_{k'}]$,
its dual is incoming at $f^*_{i_1}$, hence $f^*_{i_1}$ has exactly one incoming
inter-edge as desired. By definition this edge is labelled SW at $f^*_{i_1}$.
Likewise $f^*_{i_\ell}$ has exactly one incoming inter-edge 
labelled SE at $f^*_{i_\ell}$.
Each $f_{i_h}$ (for $1\leq h\leq \ell$) has at least three edges, and only
two edges for which the duals are intra-edges or incoming inter-edges; 
so $f^*_{i_h}$ has at least one outgoing inter-edge as desired.

Finally notice that
the predecessors of $F_k$ must be two different vertices, otherwise
$z$ would either have no outgoing edge or would appear on a face twice
(which is impossible in a 3-connected graph.)  
\end{proof}

Call the canonical ordering of
$G^*$ obtained as in Theorem~\ref{thm:dual_can_order} the {\em dual
canonical ordering.}  

\begin{figure}[ht]
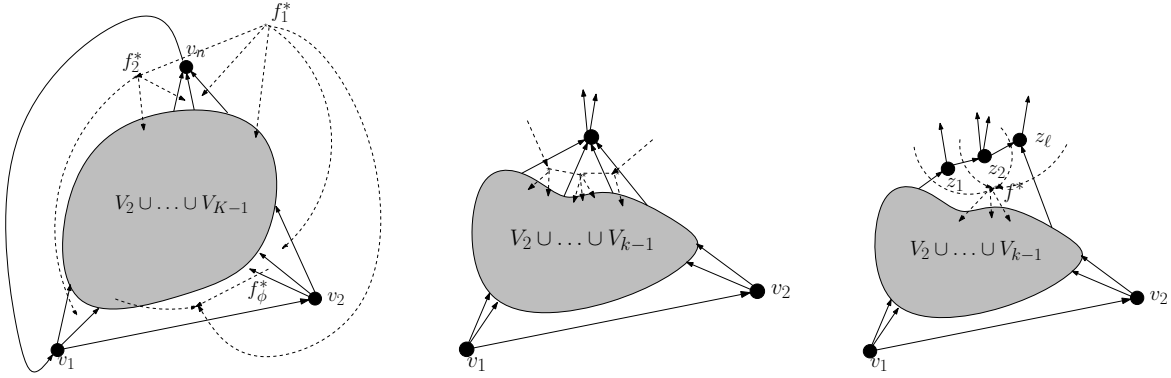

\hspace*{\fill}
\includegraphics[width=50mm,page=2]{can_order_12n.pdf}
\hspace*{\fill}
\includegraphics[width=50mm,page=3]{can_order_singleton.pdf}
\hspace*{\fill} 
\includegraphics[width=50mm,page=3]{can_order_chain_up.pdf}
\hspace*{\fill}
\caption{The dual canonical ordering obtained from a canonical ordering.}
\label{fi:dual_order}
\label{fig:dual_order}
\end{figure}

\subsection{The subgraph $H(G)$}

Now define a subgraph of $G$ from the labels of its edges.
If a vertex has NW-edges, then let the last one (in clockwise
order around $v$) be the {\em NNW-edge}.
Similarly define the  {\em NNE-edge} as the
first NE-edge in clockwise order.  

\begin{definition}
Presume a canonical ordering of a planar graph $G$ is fixed.  An edge $e$ of
$G$ is called an {\em $H$-edge} if it satisfies one of the following:
\begin{itemize}
\item[(H1)] $e$ is an intra-edge,
\item[(H2)] $e$ is the NNW-edge of its tail,
\item[(H3)] $e$ is the NNE-edge of its tail,
\item[(H4)] $e$ is the parent-edge of its head and the N-edge of its tail.
\end{itemize}
The graph formed by the $H$-edges of $G$ is denoted $H(G)$.
\end{definition}

\begin{figure}[ht]
\hspace*{\fill}
\includegraphics[scale=0.8,page=1]{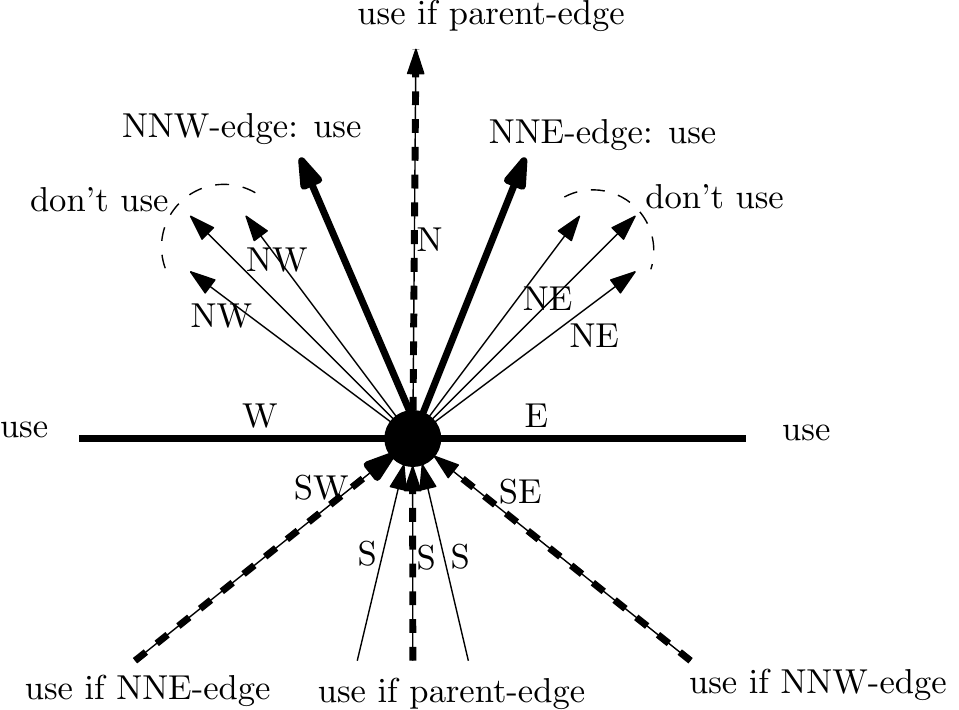}
\hspace*{\fill}
\caption{Illustration of H-edges.  Solid edges are H-edges; thick
dashed edges may be H-edges depending on the other endpoint.}
\label{fig:H-edges}
\end{figure}

\begin{lemma}
Any vertex $v$ has at most 5 incident $H$-edges.  
\end{lemma}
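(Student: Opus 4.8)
The plan is to count, for a fixed vertex $v$, how many edges incident to $v$ can satisfy each of the four defining conditions (H1)--(H4), separately for edges where $v$ is the tail and where $v$ is the head. The point of the edge labelling (Lemma~\ref{lem:label_properties}) is that the incident edges of $v$ fall into the cyclic pattern S, SW, W, NW, N, NE, E, SE, with at most one edge of each of the types SW, W, N, E, SE. First I would handle the edges for which $v$ is the \emph{tail}: by Observation~\ref{obs:properties_directed}(1) these are exactly the outgoing edges, which are the edges labelled (at $v$) with one of NW, N, NE, E, W, or the intra-edge labels E/W. Condition (H2) picks out at most one NW-edge (the NNW-edge), condition (H3) picks out at most one NE-edge (the NNE-edge), and condition (H4) picks out at most one edge, namely the N-edge (there is at most one). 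Intra-edges at $v$ (condition (H1)) are, by Lemma~\ref{lem:label_properties}, exactly the E-edge and the W-edge of $v$, so at most two. So naively this already gives up to $1+1+1+2 = 5$ outgoing $H$-edges; but I would argue that the bound is really smaller once we also account for incoming $H$-edges, and that overlaps reduce the true count.

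Next I would handle the edges for which $v$ is the \emph{head}: these are the incoming edges, labelled at $v$ with one of S, SW, SE (inter-edges) or W (if $v=z_{i+1}$ is the right endpoint of an intra-edge $(z_i,z_{i+1})$) or E (if $v=v_2$). An incoming inter-edge $v_h\to v$ can be an $H$-edge only via (H4) — it must be the parent-edge of $v$ \emph{and} be labelled N at its tail $v_h$; but an edge labelled N at $v_h$ is labelled S at $v$ (not SW or SE), and among all S/SW/SE incoming inter-edges exactly one is the parent-edge, so at most one incoming inter-edge is an $H$-edge. An incoming intra-edge at $v$ (condition (H1)) is the W-edge of $v$, of which there is at most one, and $v$ has no SW-edge in that case (Lemma~\ref{lem:label_properties}); similarly the edge $v_1\to v_2$ is an intra-edge counted once. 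So $v$ has at most two incoming $H$-edges, one an intra-edge and one a parent-edge, and moreover if $v$ has an incoming intra-edge then it has no SW-label, which will let me shave a case. The main bookkeeping is then to combine: incoming intra-edge (at most $1$), incoming parent-edge via (H4) (at most $1$), outgoing intra-edge (at most $1$, the E-edge), outgoing NNW-edge (at most $1$), outgoing NNE-edge (at most $1$) — and to check that the outgoing (H4)-edge, the N-edge, cannot be double-counted because an N-edge is neither an NW-edge nor an NE-edge nor an intra-edge, so if it is present it must \emph{replace} one of the other outgoing slots. A clean way to see the bound of $5$: partition the incident edges of $v$ by label into the at-most-one-each classes {S$\cup$SW, W, NW, N, NE, E, SE} grouped appropriately, and observe that $H$-edges only arise from the W-edge, the NNW-edge, the N-edge, the NNE-edge, the E-edge, and the parent-edge-if-N, i.e. from at most the six label-slots W, NW, N, NE, E, S — but the S-slot contributes only if its edge is the parent-edge labelled N at the other end, which forces it to be the unique S-edge that is the parent, and one checks this still leaves at most $5$ because the N-slot and this S-slot cannot both be used (a parent-edge that is the N-edge of its tail is the N-slot entry there, and at $v$ it is counted once).

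The step I expect to be the real obstacle is getting the count down from the loose "$1+1+1+1+1+1$" to exactly $5$: the careful part is ruling out the simultaneous occurrence of all of {incoming intra-edge, incoming parent-via-(H4), outgoing E/intra-edge, outgoing NNW, outgoing NNE, outgoing N-via-(H4)} as six distinct edges. The key facts I would lean on are: (i) the N-edge of $v$, if it is an $H$-edge via (H4), occupies the single N-slot and is distinct from any NW-, NE-, or intra-edge of $v$, so it is genuinely a sixth candidate only if all of W, NW, NE, E slots are also used as $H$-edges at $v$ — but then $v$ has a W-edge \emph{and} (since it is a tail of NW/N/NE edges) it is not $v_n$, and I would use Lemma~\ref{lem:parent_first_last} together with the structure of the canonical ordering to show that the incoming intra-edge case and the outgoing-N case are mutually exclusive, or more simply that if $v$ has an N-edge then $v$ has no incoming intra-edge. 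Working through Figure~\ref{fig:H-edges}, the worst vertex attains: incoming parent-edge, outgoing NNW-edge, outgoing NNE-edge, plus the E- and W-intra-edges — exactly $5$ — and the remaining job is a finite case check that no configuration beats this. I would organize the write-up as: (a) outgoing $H$-edges of $v$ are among the NNW-, NNE-, N-, E-, and W-labelled edges, hence at most $4$ since at most one of {N} versus nothing, actually at most $4$ counting W,NW,N,NE,E with the constraint that an N-edge here is the parent-edge and excludes... — and (b) incoming $H$-edges of $v$ are at most one intra-edge plus at most one parent-edge, with the parent-edge being incoming and, if it is an $H$-edge, labelled S at $v$; (c) combine and eliminate the one overlap to reach $5$.
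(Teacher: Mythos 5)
Your proposal has a genuine gap, and it is concentrated in one incorrect claim: that an incoming inter-edge $v_h\to v$ ``can be an $H$-edge only via (H4).'' That is false. Conditions (H2) and (H3) are evaluated at the \emph{tail} of the edge, so the edge labelled SW at $v$ is labelled NE at its tail $v_h$ and may well be the NNE-edge of $v_h$, making it an $H$-edge by (H3); symmetrically the SE-edge at $v$ may be the NNW-edge of its tail and an $H$-edge by (H2). Once these two candidates are restored, your inventory of potential $H$-edges at $v$ becomes: up to two outgoing inter-edges (after the (H4)-vs-(H2)/(H3) exclusion), up to two intra-edges, the SW-edge, the SE-edge, and one incoming S-labelled parent-edge --- seven candidates, not five. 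The paper closes this gap using the last two bullets of Lemma~\ref{lem:label_properties}: no vertex has both a W-edge and a SW-edge, and no vertex has both an E-edge and a SE-edge, so the group \{intra-edges, SW-edge, SE-edge\} contains at most two edges in total. Your write-up never invokes these exclusions, and without them the bound of 5 does not follow.

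The second weak point is the overlap argument for the N-edge. The paper's argument is: if (H4) contributes an outgoing edge at $v$, that edge is the N-edge of $v$ and, being a parent-edge, is by Lemma~\ref{lem:parent_first_last} the first or last outgoing edge of $v$; hence $v$ has no NW-edges or no NE-edges at all, so one of (H2), (H3) is vacuous at $v$ and the outgoing inter-edge count is at most 2. You gesture at Lemma~\ref{lem:parent_first_last} but then propose a different exclusion (``if $v$ has an N-edge then $v$ has no incoming intra-edge''), which is not the relevant trade-off and is not established; the incoming intra-edge is in the wrong group to cancel against the N-edge. Your overall strategy --- partition the incident edges by label-slot and count per slot --- is the same as the paper's, but the two load-bearing steps (the SW/SE-vs-W/E exclusion and the N-edge-vs-NNW/NNE exclusion) are respectively missing and misdirected.
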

\begin{proof}
Observe first that $v$ has at most two incident $H$-edges that are outgoing 
inter-edges.  For no such edge is added under rule (H1).  Rules
(H2), (H3) and (H4) add at most one such $H$-edge each.  But if rule (H4)
adds edge $e$, then $e$ is the N-edge of $v$.  By Lemma~\ref{lem:parent_first_last}
it also is the first or last outgoing edge of $v$.  Therefore if rule (H4)
applies then $v$ has no NW-edge or no NE-edge, and so one of rules (H2) and (H3)
does not apply.  

Next consider the group of edges at $v$ consisting of the intra-edges at $v$,
and the SW-edge and SE-edge.  Clearly this group has at most four edges, but 
actually they are only two edges by Lemma~\ref{lem:label_properties}.
So $v$ has at most two incident $H$-edges in this group.

All edges at $v$ that are neither outgoing inter-edges nor in the above
group are incoming edges labelled $S$.  Such an edge is an $H$-edge
only if it is the parent-edge of $v$, so there is at most one $H$-edge among them.
So $v$ has at most 5 incident $H$-edges.
\end{proof}

Let $H(G^*)$ be the graph formed by the H-edges of $G^*$, using the dual
canonical ordering.   $H(G^*)$ also has maximum degree 5.  
Note that neither $H(G)$ nor $H(G^*)$ is necessarily a tree.  It is not even
obvious that they are connected (though in the following this will be shown to
hold.)
The plan is now to find a spanning tree of $H(G)$ for which the co-tree
belongs to $H(G^*)$.  Two lemmas are needed for this.

\begin{lemma}
\label{lem:dual_in_Hstar}
\label{lem:dual_Hstar}
Let $e$ be an edge in $G-H(G)$.  
Then the dual edge $e^*$ of $e$ belongs to $H(G^*)$.
\end{lemma}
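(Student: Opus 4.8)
The plan is to take an edge $e$ of $G$ that is not an $H$-edge, describe exactly what its label must be, and then invoke Theorem~\ref{thm:dual_can_order} to read off the label of $e^*$ in $G^*$ and check that $e^*$ satisfies one of (H1)--(H4). First I would use the label classification of Lemma~\ref{lem:label_properties} to enumerate the possibilities for the label of $e$ at each of its two endpoints. If $e$ is an intra-edge of $G$ (labelled E/W), then by Theorem~\ref{thm:dual_can_order} its dual $e^*$ is an S-edge of $G^*$; but a non-$H$-edge cannot be an intra-edge, so this case does not arise. Symmetrically, if $e$ is an S-edge of $G$, then $e^*$ is an intra-edge of $G^*$, hence an $H$-edge by rule (H1), and we are done. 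So the interesting case is when $e$ is an inter-edge of $G$ that is not labelled S at either end: it is labelled SW or SE at its head $w$, hence NW or NE at its tail $v$.

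Next I would handle the SW-at-$w$ / NW-at-$v$ case (the SE/NE case being symmetric by the left-right mirror symmetry of everything, with SW$\leftrightarrow$SE, NW$\leftrightarrow$NE, left$\leftrightarrow$right). By Theorem~\ref{thm:dual_can_order}, $e^*$ is an SE-edge of $G^*$, directed from the left face of $e$ to the right face of $e$; so $e^*$ is an inter-edge of $G^*$, and its tail is $f_\ell^*$ where $f_\ell$ is the left face of $e$. The goal is to show $e^*$ is the NNE-edge of $f_\ell^*$, i.e.\ the first NE-edge of $f_\ell^*$ in clockwise order, which would make $e^*$ an $H$-edge by rule (H3). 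Here I need to relate the clockwise order of edges around the dual vertex $f_\ell^*$ to the boundary walk of the face $f_\ell$ in $G$. Since $e$ is the NW-edge of $v$ at the tail, and since (by Lemma~\ref{lem:label_properties}) there is at most one N-edge and the NE-edges come clockwise after the NW-edges, the key claim will be that $e$ is \emph{not} the NNW-edge of $v$ — otherwise $e$ would be an $H$-edge by (H2) — so $v$ has an NW-edge clockwise after $e$, and this forces $e^*$ to be followed (in the appropriate rotational sense around $f_\ell^*$) by an edge of $f_\ell^*$ whose dual is that later NW-edge. Working through the correspondence between "clockwise-next NW-edge at $v$" and "the face $f_\ell$ on the other side", together with the fact that all incoming inter-edges of $f_\ell^*$ are duals of SW- or SE-edges (from the proof of Theorem~\ref{thm:dual_can_order}), should pin down that $e^*$ is the first NE-edge of $f_\ell^*$.

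The main obstacle I anticipate is precisely this bookkeeping between rotation systems: translating a statement about the \emph{local} clockwise order of labelled edges around a vertex $v$ of $G$ into a statement about the clockwise order of edges around the dual vertex $f_\ell^*$, and making sure "first NE-edge" transfers correctly rather than, say, "last NE-edge" or some N-edge sneaking in between. I would manage this by arguing directly on the two directed paths bounding $f_\ell$ (Observation~\ref{obs:properties_directed}(3)): $e$, being an NW-edge of $v$, lies on the left boundary path of $f_\ell$ as seen from $v$, and the incoming dual edges of $f_\ell^*$ are exactly the duals of the two "extreme" edges of $f_\ell$ (the SW-edge at the bottom-left and the SE-edge at the bottom-right of the face), with everything else outgoing; combining this with the non-(H2) condition on $e$ should force $e^*$ into the NNE slot of $f_\ell^*$. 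One subtlety to check separately is when a face of $G$ (or of $G^*$) is incident to the special edge $(v_1,v_n)$, where Observation~\ref{obs:properties_directed} gives a directed cycle rather than two paths; I would verify the claim there by hand, noting that $(v_1,v_n)$ itself is labelled S at $v_1$ and its dual is therefore an intra-edge handled already, and that the other boundary edges of such a face are forced into predictable labels as in the proof of the Corollary following Theorem~\ref{thm:Barnette}.
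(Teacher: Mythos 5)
Your overall strategy is the same as the paper's: dispose of intra-edges and S-edges first (the dual of an S-edge is an intra-edge of $G^*$, hence in $H(G^*)$ by (H1)), then for a NW- or NE-labelled non-$H$-edge use the fact that $e$ is not the NNW-(resp.\ NNE-)edge to produce an adjacent edge with the same label at the tail $v$, and show that $e^*$ is the NNE-(resp.\ NNW-)edge of the face between them. That is exactly the key idea of the paper's proof. However, as written your case setup contains a genuine inconsistency: by the labelling definition, SW at the head $w$ corresponds to NE at the tail $v$, and SE at $w$ corresponds to NW at $v$, so ``the SW-at-$w$ / NW-at-$v$ case'' does not exist. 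This error propagates: you then apply the ``dual of a SW-edge'' clause of Theorem~\ref{thm:dual_can_order} (giving a SE-edge of $G^*$ with tail $f_\ell^*$) while simultaneously reasoning about $e$ as a NW-edge of $v$ (invoking the NNW-edge and rule (H2)). For $e$ labelled NW at $v$ (hence SE at $w$), the dual $e^*$ is a \emph{SW}-edge of $G^*$ directed from the \emph{right} face of $e$ to the left face, so its tail is $f_r^*$, not $f_\ell^*$, and the target is that $e^*$ is the NNE-edge of $f_r^*$. With the faces crossed as in your write-up, the argument does not close.

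Two further points on the finishing step you leave as ``should pin down.'' First, your proposed route via Observation~\ref{obs:properties_directed}(3) rests on the claim that the incoming dual edges of a face-vertex are exactly the duals of the two extreme (SW and SE) edges; this is false in general, since the duals of the intra-edges of the completing chain are also incoming at that face-vertex. Second, the clean closure (the one the paper uses) is purely local: since $e$ is not the NNW-edge of $v$, the clockwise-next edge $e'$ at $v$ is also a NW-edge; the face $f$ between them is the right face of $e$ and the left face of $e'$, so by Theorem~\ref{thm:dual_can_order} $(e')^*$ is labelled SW at $f^*$ and $e^*$ is labelled NE at $f^*$; because $e^*$ and $(e')^*$ are consecutive in the rotation at $f^*$, Lemma~\ref{lem:label_properties}'s clockwise order of labels forces $e^*$ to be the first NE-edge, i.e.\ the NNE-edge, of $f^*$. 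You should also not forget the edge $v_n\rightarrow v_1$, which is not in $H(G)$ but is an S-edge, so it falls under your second case.
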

\begin{proof}
If $e$ is a N-edge, then its dual is an intra-edge and hence belongs
to $H(G^*)$.  Edge $e$ cannot be a NNW-edge or NNE-edge or intra-edge 
since it is not in $H(G)$.
The remaining case is hence that $e$ is a NW-edge of its tail $v$, but not
the NNW-edge.  (The case of a NE-edge that is not the NNE-edge is similar.)
Figure~\ref{fig:proofs} (left) illustrates this case.

Let $e'$ be the clockwise next edge at $v$; this is also a NW-edge of $v$
since $e$ is not the NNW-edge.  Let $f$ be the face between $e$ and $e'$ at $v$.
By Theorem~\ref{thm:dual_label}, edge 
$(e')^*$ is labelled SW at $f^*$ while $e^*$
is labelled NE.  Since $e^*$ and $e'^*$ are consecutive at $f^*$,
therefore $e^*$ is the NNE-edge of $f^*$ and hence in $H(G^*)$.
\end{proof}

\begin{figure}[ht]
\hspace*{\fill}
\includegraphics[scale=0.4,page=2]{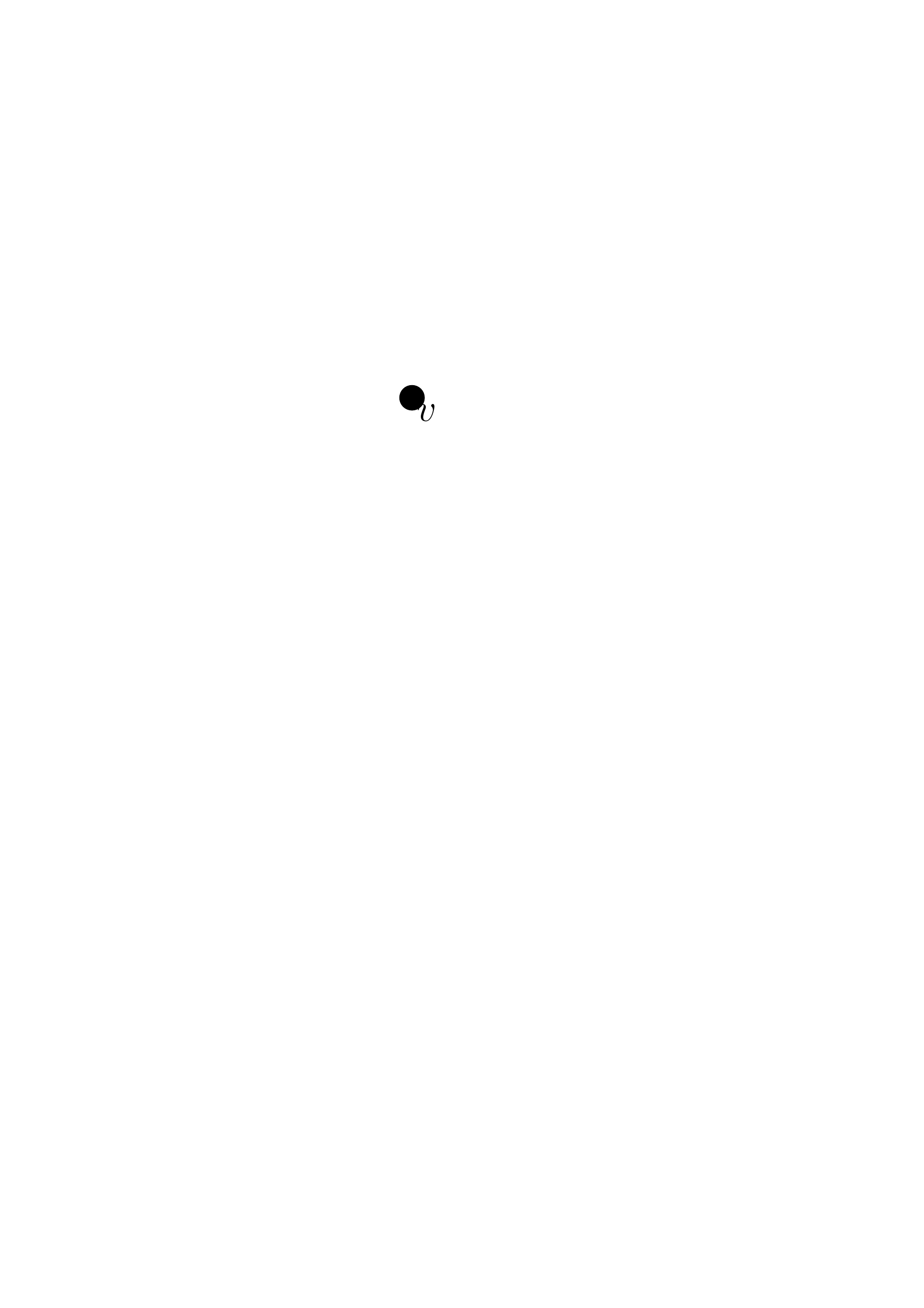}
\hspace*{\fill}
\includegraphics[scale=0.4,page=2]{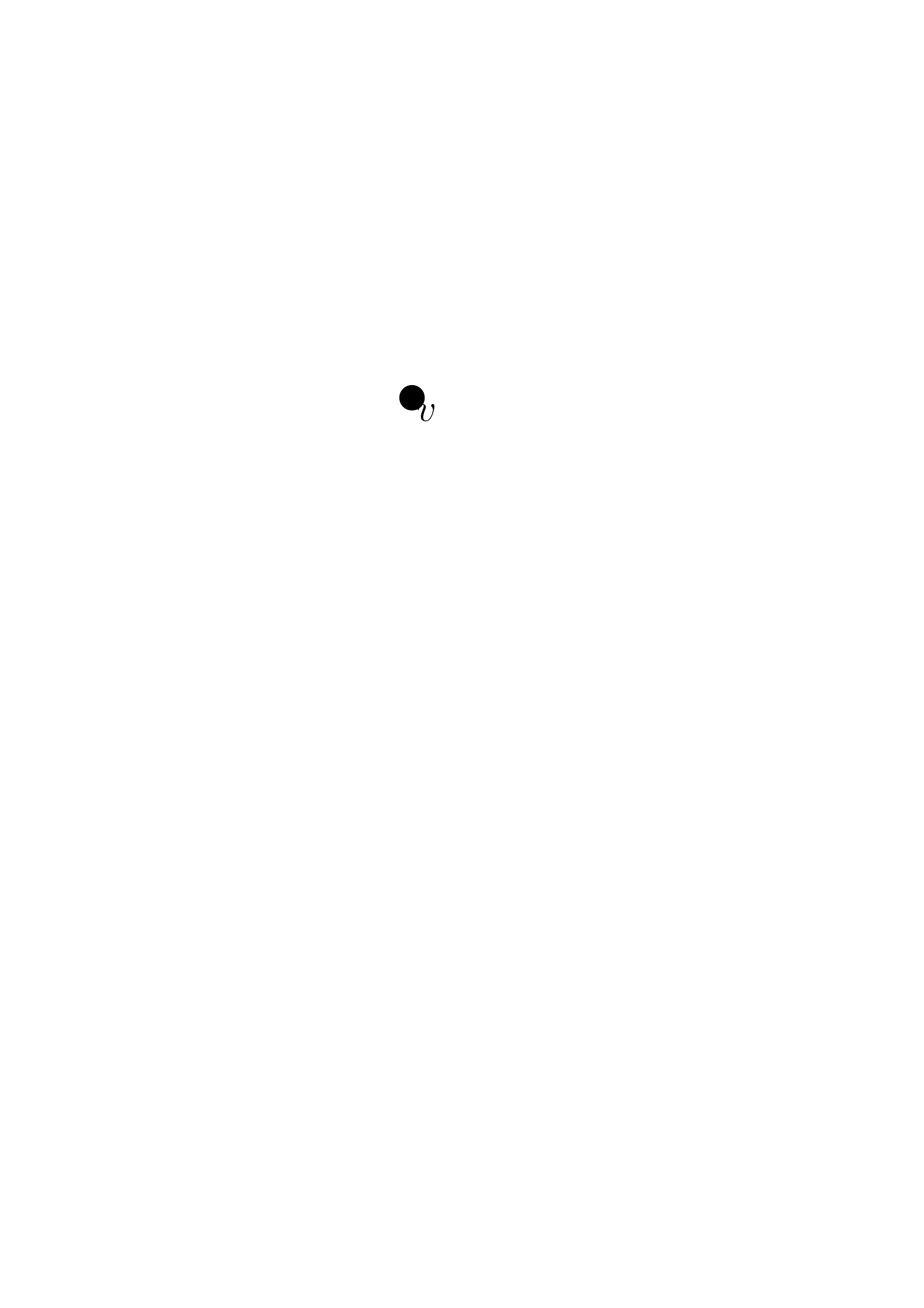}
\hspace*{\fill} 
\includegraphics[scale=0.4,page=2]{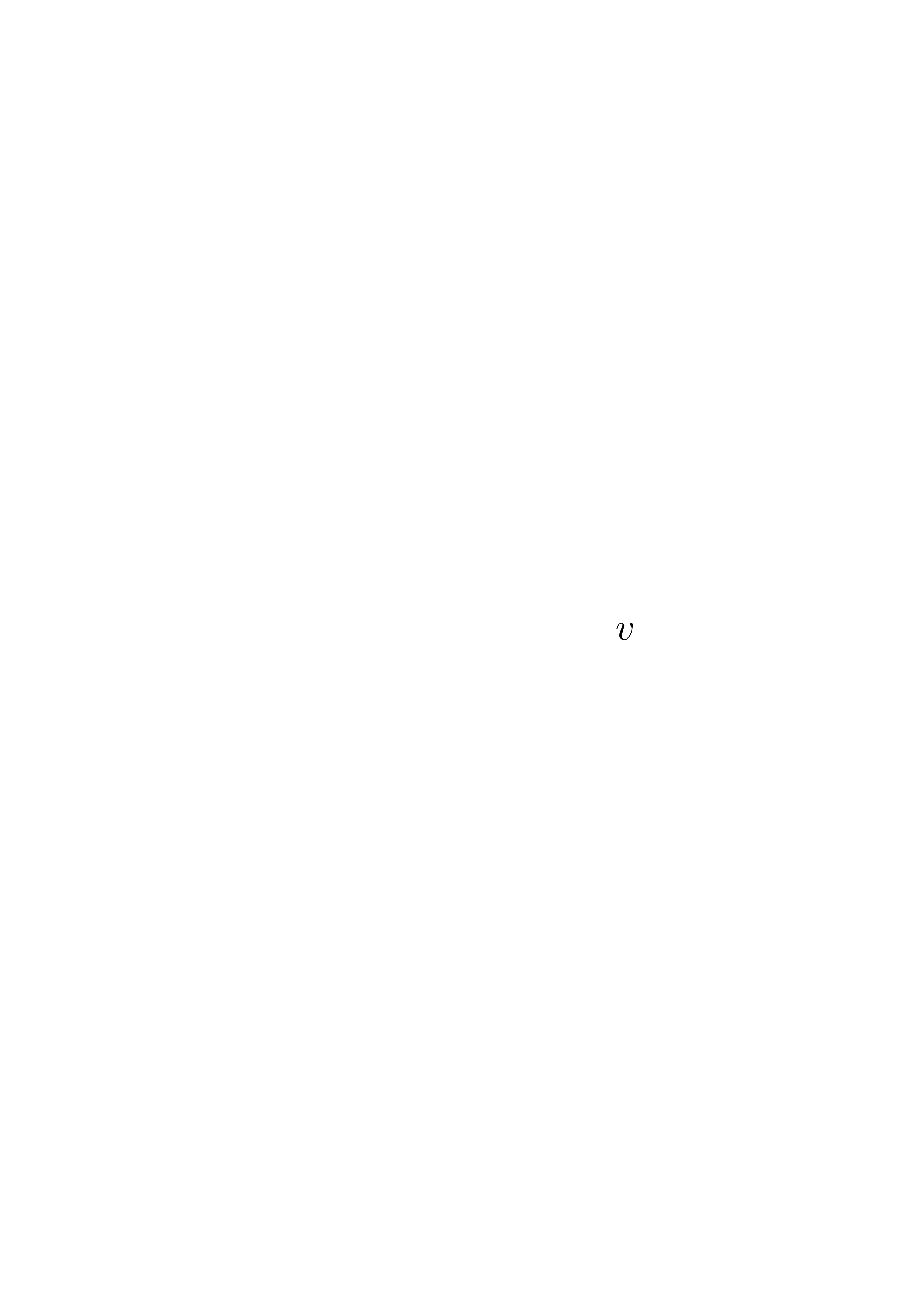}
\hspace*{\fill}
\caption{For the proofs of Lemma~\ref{lem:dual_in_Hstar} and
\ref{le:cycle_breaking}.}
\label{fig:proofs}
\end{figure}

\begin{lemma}
\label{le:cycle_breaking}
Let $C$ be a cycle of edges in $H(G)$.  Then there exists an edge $e\in C$
such that $e^*$ belongs to $H(G^*)$.
\end{lemma}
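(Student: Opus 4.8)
The plan is to exploit the acyclic orientation: locate the highest-index vertex of $C$, read off a convenient edge there, and dualize it in the style of Lemma~\ref{lem:dual_in_Hstar}.

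First a reduction. If some $e\in C$ satisfies (H4), then $e$ is the N-edge of its tail, hence an S-edge at its other endpoint, so by Theorem~\ref{thm:dual_label} the edge $e^*$ is an intra-edge of $G^*$ and therefore an $H$-edge of $G^*$; we are done. So assume every edge of $C$ is an intra-edge, an NNW-edge, or an NNE-edge. Note that $(v_1,v_n)$ is not an $H$-edge (it is labelled N at $v_n$ and S at $v_1$, is not an intra-edge, and is not the parent-edge of $v_1$), so $(v_1,v_n)\notin C$, and by Observation~\ref{obs:properties_directed} the graph $C$ is not a directed cycle. Let $w$ be the vertex of $C$ of maximum index; since $(v_1,v_n)\notin C$, both edges of $C$ at $w$ point into $w$.

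These two edges are incoming $H$-edges, so each is labelled S, SW, W, or SE at $w$. An incoming S-edge can be an $H$-edge only via (H4) (which is excluded), an incoming W-edge on $C$ is an intra-edge, an incoming SW-edge can be an $H$-edge only as the NNE-edge of its tail, and an incoming SE-edge only as the NNW-edge of its tail. Since by Lemma~\ref{lem:label_properties} a vertex has at most one SW-, W-, and SE-edge and never a W- and an SW-edge simultaneously, the pair of labels at $w$ is $\{W,\mathrm{SE}\}$ or $\{\mathrm{SW},\mathrm{SE}\}$; in either case the SE-edge $e=v\to w$ lies on $C$ and, being an $H$-edge, is the NNW-edge of $v$. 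By Theorem~\ref{thm:dual_label}, $e^*$ is an SW-edge of $G^*$ directed $r^*\to\ell^*$, where $r,\ell$ are the right and left face of $e$, so $e^*$ is labelled NE at $r^*$; it therefore suffices to show that $e^*$ is the NNE-edge of $r^*$, which gives $e^*\in H(G^*)$ via (H3).

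Following Lemma~\ref{lem:dual_in_Hstar}, let $b$ be the clockwise-next edge at $v$ after $e$, so that $r$ is the face between $e$ and $b$ and $b^*$ is the clockwise-predecessor of $e^*$ at $r^*$. As $e$ is the NNW-edge of $v$, the edge $b$ is not a NW-edge of $v$. Running through the possible labels of $b$ at $v$ (one of N, NE, E if outgoing, or S, SW, W, SE if incoming) and applying the label-and-direction rules of Theorem~\ref{thm:dual_label}, one checks that in every case \emph{except} $b$ being the SE-edge of $v$ the edge $b^*$ is not labelled NE at $r^*$, which forces $e^*$ to be the first NE-edge of $r^*$, as required. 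The exceptional case — $b$ the SE-edge of $v$, equivalently every outgoing edge of $v$ being a NW-edge — is the crux: there $b^*$ is itself an NE-edge at $r^*$, so $e^*$ need not come first. I would handle it by a descent: if $b\in C$, then $b$ is an SE-edge (of $v$) and an NNW-edge (of its tail) lying on $C$ with right face $r$, so replace $e$ by $b$ and repeat; the head index strictly decreases, so the process ends either in the non-exceptional case above, or at an edge $e'=v'\to v''$ of $C$ whose SE-successor at $v'$ is not on $C$. In that terminal situation the second edge of $C$ at $v'$ must be the SW- or W-edge of $v'$ (an S-edge there would be an (H4)-edge), and one must argue — using the local structure of $C$ at $v'$, or which side of $C$ the face $r$ lies on, or the mirror-image (NW$\leftrightarrow$NE, SW$\leftrightarrow$SE) version of the whole argument — that $C$ still contains an edge with dual in $H(G^*)$. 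I expect this terminal case to be the main obstacle; everything before it is bookkeeping with the edge labels and the acyclic orientation, and the two faces incident to $(v_1,v_n)$, where Observation~\ref{obs:properties_directed}(2) replaces (3), only arise as $r$ for $C$ in degenerate positions and cause no real difficulty.
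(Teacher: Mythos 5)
Your reduction to the case where $C$ has no N-edge, and your handling of the ``non-exceptional'' configurations, are sound and correspond to the paper's cases (C1)--(C3): a NW-edge of $C$ whose clockwise successor at its tail is not the SE-edge dualizes to an NNE-edge of the intervening face (and symmetrically for NE-edges). Two remarks on the bookkeeping before the main point. First, your claim that the two incoming edges of $C$ at the top vertex $w$ carry labels $\{W,SE\}$ or $\{SW,SE\}$ misses the mirror pair $\{E,SW\}$: an intra-edge $(z_i,z_{i+1})$ is labelled E at $z_i$ but is \emph{incoming} at $z_i$ when the chain is directed $z_\ell\to\dots\to z_1$, so $w=z_1$ can have an incoming E-labelled intra-edge together with its incoming SW-edge. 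This is symmetric and repairable. Second, in your terminal situation the ``second edge of $C$ at $v'$'' need not be incoming at all (it could be the NNE-edge of $v'$ or an outgoing intra-edge), so that part of the setup also needs care.

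The genuine gap is the terminal case you yourself flag, and it cannot be closed by any variant of the adjacent-edge argument you are using, because in that configuration the witnessing dual edge is not an NNW- or NNE-edge of $G^*$ at all. The paper's case (C4) shows what actually happens: when every inter-edge of $C$ at its lowest tail is an NNW- (or NNE-) edge immediately followed clockwise by the tail's SE- (resp.\ SW-) edge, one is forced onto a full chain-group $z_1-\dots-z_\ell$ whose intra-edges lie on $C$, with the NNW- and NNE-edges of $C$ attached at $z_\ell$ and $z_1$. The witness is then one of these \emph{intra-edges} $(z_p,z_{p+1})$: its dual is the parent-edge of $f^*$, where $f$ is the face completed by the chain-group, and one shows (using Lemma~\ref{lem:parent_first_last} in the dual to rule out $p=0$ and $p=\ell$) that this parent-edge is a N-edge of $G^*$, hence in $H(G^*)$ by rule (H4). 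Your proposal never invokes rule (H4) on the dual side, nor the structure of the dual canonical ordering around a completed face, and without that idea the descent terminates in a configuration where no edge of $C$ dualizes to an (H2)- or (H3)-edge of $G^*$ --- so the argument as sketched cannot be completed.
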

\begin{proof}
There are three cases where $e$ can be found easily; the bulk of the
proof deals with
the more complicated situation where none of them applies.
\begin{itemize}
\item[(C1)]  $C$ contains a N-edge $e$.  Then $e^*$ is an
intra-edge and belongs to $H(G^*)$ by rule (H1).
\item[(C2)] $C$ contains a NW-edge $e$ such that the clockwise
next edge $e'$ at $e$'s tail $v$ is not a SE-edge.  
This case is illustrated in Figure~\ref{fig:proofs}(middle).
Let $f$ be the face between $e$ and $e'$.
Since $e$ is a NW-edge, $e^*$ is a NE-edge.  Since
$e'$ is not a SE-edge, $(e')^*$ is not a NE-edge.  So
$e^*$ is the NNE-edge of $f^*$ and belongs to $H(G^*)$ by rule (H2).
\item[(C3)] $C$ contains a NE-edge $e$ such that the counter-clockwise
next edge at $e$'s tail is not a SW-edge.    With a symmetric argument
to (C2) one then shows that $e^*$ is a NNW-edge and belongs to $H(G^*)$ by 
rule (H3).
\item[(C4)] None of the above cases applies.  Since intra-edges form
paths, cycle $C$ must contain some inter-edges.  Let $e_1$ be the inter-edge
of $C$ that minimizes the index of its tail $v$.  $e_1$ is not a N-edge,
otherwise (C1) would apply.  So $e_1$ is either the NNW-edge or the NNE-edge
of $v$.  Assume that $e_1$ is a NNW-edge; the
other case is symmetric.  The goal is now to show that the situation is as
illustrated in Figure~\ref{fig:proofs}(right).

Let $e_2$ be the other edge in $C$ incident to $v$.  Edge $e_2$ cannot
be a N-edge at $v$, otherwise (C1) would apply.  It also cannot be a NE-edge
or E-edge at $v$, otherwise the clockwise edge after $e_1$ at $v$
is not a SE-edge and (C2) would apply.  Edge $e_2$ also cannot
be a SE-edge or S-edge or SW-edge at $v$, otherwise it would be an incoming inter-edge 
and its tail would have a smaller
index than $v$, contradicting the choice of $e_1$.  Also $e_2$ cannot
be a NW-edge at $v$, because the NNW-edge $e_1$ is the only NW-edge that is an
$H$-edge at $v$.
Thus edge $e_2$ must be an intra-edge labelled W at $v$.

Let $V_k=\{z_1,\dots,z_\ell\}$ be the chain-group containing edge $e_2$.
Notice that $v$ has no E-edge (otherwise (C2) would apply), 
so $v=z_\ell$.  Let $a$ be the minimal index such that
that path $z_a-z_{a+1}-\dots-z_\ell$ is part of $C$.  Let $e_3$ be the edge incident
to $z_a$ that is on $C$ and different from $(z_a,z_{a+1})$.  Observe that
$e_3$ is an inter-edge, for if it were an intra-edge then its other endpoint
would be $z_{a-1}$, contradicting the definition of $a$.  Also observe that
$e_3$ cannot be incoming at $z_a$, for otherwise the index of its tail
would be smaller than all indices in $V_k$, and in particular smaller than
the index of $v=z_\ell$; this contradicts the choice of $e_1$.

So $e_3$ is an outgoing inter-edge at $z_a$.  If $e_3$ were a N-edge then
(C1) would apply.  If it were a NW-edge, then (due to E-edge 
$(z_a,z_{a+1})$) (C2) would apply.  So $e_3$ is a NE-edge. Since it is
an $H$-edge, therefore is must be the NNE-edge of $z_a$.  Since (C3)
does not apply, $z_a$ cannot have a W-edge, which shows that $a=1$.

Let $f$ be the face completed by $V_k$, and let $f^*_{i_0},\dots,f^*_{i_\ell}$ 
be the predecessors of $f^*$ in the dual canonical order.  By the correspondence
of edge-label of Theorem~\ref{thm:dual_can_order}, 
$f_{i_0}$ shares the SW-edge of $z_1$ with $f$, face $f_{i_h}$ (for $1\leq h<\ell$) 
shares $(z_i,z_{i+1})$ with $f$, and $f_{i_\ell}$ shares the
SE-edge of $z_\ell$ with $f$.

Let $f^*_{i_p}\rightarrow f^*$ be the parent-edge of $f^*$ in the dual
canonical ordering.  Observe that $p\neq 0$.  For edge $(f^*_{i_0},f^*)$
is a NW-edge at $f^*_{i_0}$, as is $e_3^*$.  Thus  $(f^*_{i_0},f^*)$
is not the first outgoing edge at $f^*_{i_0}$, and by 
Lemma~\ref{lem:parent_first_last} hence not a parent-edge.
Likewise one shows $p\neq \ell$.  So $1\leq p< \ell$ and the parent-edge of $f^*$
is a N-edge.  By rule (H4) the parent-edge of $f^*$ is in 
$H(G^*)$.  Setting $e=(z_p,z_{p+1})$ hence yields the result.
\end{itemize}
\end{proof}

\subsection{Putting it all together}

\begin{theorem}
Every planar 3-connected graph $G$ has a spanning tree $T$ such that
both $T$ and its co-tree have maximum degree at most 5.  $T$
can be found in linear time.
\end{theorem}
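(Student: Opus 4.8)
The plan is to let $T$ be a spanning tree of $G$ that lies inside $H(G)$ and is chosen so that the dual of every non-tree edge lies inside $H(G^*)$. If this succeeds then $\deg_T(v)\le\deg_{H(G)}(v)\le 5$ for every vertex $v$, and $\deg_{T^*}(f^*)\le\deg_{H(G^*)}(f^*)\le 5$ for every face $f$, where the co-tree is $T^*=\{e^*:e\in E(G)\setminus T\}$; moreover $T^*$ is automatically a spanning tree of $G^*$, as noted in the introduction. So everything reduces to exhibiting such a $T$ in linear time.

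The first thing to establish is that $H(G)$ is a connected spanning subgraph of $G$; I expect this to be the main obstacle, as it is asserted but not yet proved above. The natural route is induction along the groups $V_1,\dots,V_K$ of the canonical ordering: the edge $v_1\to v_2$ is an intra-edge, hence an $H$-edge by rule~(H1), so $V_1$ lies in one component of $H(G)$; when a chain-group $\{z_1,\dots,z_\ell\}$ is added its intra-edges are $H$-edges and tie its vertices together, so it suffices to anchor it to an earlier group via an $H$-edge, and symmetrically a singleton-group $\{z\}$ needs one incident $H$-edge leading downward. These anchoring statements split according to the label at the tail of the relevant parent-edge (Lemma~\ref{lem:parent_first_last}): if that label is N the edge is an $H$-edge by rule~(H4); the problematic cases, where the parent-edge is a first (resp.\ last) outgoing NW-edge (resp.\ NE-edge) of its tail that happens not to be its tail's NNW-edge (resp.\ NNE-edge), are resolved by examining the face completed by the group and using that its boundary consists of two directed paths (Observation~\ref{obs:properties_directed}) to force some incident edge of the group to be the NNW- or NNE-edge of its tail. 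Making this case analysis exhaustive is the main difficulty of the proof.

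Granting that $H(G)$ is connected and spanning, the rest is bookkeeping. Let $F=\{e\in E(G):e^*\notin H(G^*)\}$. The contrapositive of Lemma~\ref{lem:dual_in_Hstar} gives $F\subseteq H(G)$, and $F$ is acyclic: a cycle inside $F$ is a cycle of $H(G)$, so by Lemma~\ref{le:cycle_breaking} it contains an edge whose dual lies in $H(G^*)$ and which is therefore not in $F$ — a contradiction. Hence $F$ is a forest inside the connected spanning subgraph $H(G)$, and I extend $F$ to a spanning tree $T$ of $H(G)$; this $T$ is a spanning tree of $G$ with $F\subseteq T$. Every $e\in E(G)\setminus T$ then avoids $F$, so $e^*\in H(G^*)$, whence the co-tree $T^*\subseteq H(G^*)$. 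Together with the maximum-degree-$5$ property of $H(G)$ and $H(G^*)$ this proves the existence claim.

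For the running time: a canonical ordering is found in linear time \cite{Kant96}, and from it the dual canonical ordering (the construction in Theorem~\ref{thm:dual_can_order} is explicit), the edge labels, the subgraphs $H(G)$ and $H(G^*)$, and the set $F$ are all computable in linear time from a planar embedding. To turn the forest $F$ into a spanning tree of $H(G)$ in linear time, contract each tree of $F$ to a single vertex, compute a spanning tree of the resulting multigraph by a depth-first search through $H(G)$, and lift it back. Thus $T$ is found in linear time.
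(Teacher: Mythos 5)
Your reduction of the theorem to the connectivity of $H(G)$ plus the acyclicity of $F=\{e: e^*\notin H(G^*)\}$ is exactly the paper's skeleton (the paper calls your $F$ ``$H_0$'' and builds $T$ as a minimum spanning tree under weights $0/1/\infty$ rather than by contraction and DFS, but these are interchangeable and both linear-time). The acyclicity of $F$, the inclusion $F\subseteq H(G)$, the containments $T\subseteq H(G)$ and $T^*\subseteq H(G^*)$, and the degree bounds are all argued correctly.

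The gap is the connectivity of $H(G)$, which you explicitly defer (``the main difficulty of the proof'') and for which your sketched induction is not sound as stated. The anchoring step you propose --- that each group $V_k$ has an incident $H$-edge leading to an earlier group --- is not something you can take for granted: a non-parent $S$-edge is never an $H$-edge; the SW-edge (resp.\ SE-edge) of a group is an $H$-edge only if it happens to be the NNE-edge (resp.\ NNW-edge) of its tail, which Figure~\ref{fig:H-edges} explicitly flags as conditional; and the parent-edge, while by Lemma~\ref{lem:parent_first_last} the first or last outgoing edge of its tail, can be a NW-edge that is not the \emph{last} NW-edge (i.e., not the NNW-edge) of its tail, hence not an $H$-edge. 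So a group may have no downward $H$-edge at all, and the induction in increasing order of groups breaks. The paper avoids any such case analysis by deriving connectivity globally from the two lemmas you already have: if $H(G)$ were disconnected there would be a nontrivial edge cut contained in $G-H(G)$; by Lemma~\ref{lem:dual_in_Hstar} every dual of a cut edge lies in $H(G^*)$; since a cut of a planar graph dualizes to a union of cycles, this produces a cycle of $H(G^*)$-edges; and the dual form of Lemma~\ref{le:cycle_breaking} (applicable because the dual canonical ordering is itself a canonical ordering of $G^*$) then yields an edge of that cycle whose dual lies in $H(G)$, contradicting that the cut avoids $H(G)$. You should either supply this cut--cycle argument or actually carry out (and justify) the exhaustive case analysis you postponed; as written, the existence claim is not proved.
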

\begin{proof}
First observe that $H(G)$ is connected.  For if it were disconnected,
then there would exist a non-trivial cut with all cut-edges in $G-H(G)$.
By Lemma~\ref{lem:dual_Hstar} the duals of the cut-edges belong to $H(G^*)$.  
Since cuts in a planar graph correspond to unions of cycles in the dual, 
hence the duals of the cut-edges contain a non-empty cycle $C$ of edges in $H(G^*)$. 
By Lemma~\ref{le:cycle_breaking} one edge of $C$ has
its dual in $H(G)$, contradicting the definition of the cut.

Let $H_0$ be all those edges in $H(G)$ for which the dual edge does not
belong to $H(G^*)$.    By Lemma~\ref{le:cycle_breaking} $H_0$ contains
no cycle, so it is a forest.  Assign a weight of 0 to all edges in $H_0$,
a weight of 1 to all edges in $H(G)-H_0$, and a weight of $\infty$ to
all edges in $G-H(G)$.  Then compute a minimum spanning tree $T$ of $G$.  Since
$H_0$ is a forest, all its edges are in $T$.  Since $H(G)$ is connected,
no edge in $G-H(G)$ belongs to $T$.  So $T$ is a subgraph of $H(G)$ and
has maximum degree at most 5.  All edges in the co-tree $T^*$ of $T$ are duals
of edges that are in $G-H_0$, and by definition of $H_0$ and
Lemma~\ref{lem:dual_Hstar} these edges belong to $H(G^*)$.  
So $T^*$ is a subgraph of
$H(G^*)$ and has maximum degree at most 5.  

It remains to analyze the time complexity.  One can compute a canonical
ordering in linear time, and from it, obtain the dual canonical ordering
and the edge-sets $H(G)$ and $H(G^*)$ in linear time.  The bottleneck
is hence the computation of the minimum spanning tree.  But there are only
3 different weights, and using a bucket-structure, rather than a priority
queue, in Prim's algorithm, we can find
the next vertex to add to the tree in constant time. Hence the minimum
spanning tree can be found in linear time.
\end{proof}

%%%%%%%%%%%%%%%%%%%%%%%%%%%%%%%%%%%%%%%%%%%%%%%%%%%%%%%%%%%%%%%%%%%%%%%%
\section{Conclusion}

In this paper, we showed that every planar 3-connected graph has
a spanning tree of maximum degree 5 such that the co-tree also
has a spanning tree of maximum degree 5.  This is a first step
towards proving Gr\"unbaum's conjecture.

Barnette's theorem has as easy consequence that every planar
3-connected graph has a {\em 3-walk}: a walk that visits every
vertex at most 3 times.  But in fact, one can show a stronger
statement: Every planar 3-connected graph has a 2-walk \cite{GR94}.
The results in the paper imply that every planar 3-connected
graph has a 5-walk that also visits each face $f$ (in the sense of
``walking along part of the boundary of $f$'') at most 5 times.
A first step towards Gr\"unbaum's conjecture would be to try
to reduce  this ``5'' to a smaller number.
%smaller degrees?
%
%simpler exposition, especially vertex-groups vs. chain-groups

%%%%%%%%%%%%%%%%%%%%%%%%%%%%%%%%%%%%%%%%%%%%%%%%%%%%%%%%%%%%%%%%%%%%%%%%
\bibliographystyle{plain}
\bibliography{../bib/journal,../bib/full,../bib/gd,../bib/papers}

%%%%%%%%%%%%%%%%%%%%%%%%%%%%%%%%%%%%%%%%%%%%%%%%%%%%%%%%%%%%%%%%%%%%%%%%
%\begin{appendix}

%\section{Appendix Stuff}

\end{document}